\author{Kevin Shu}
\title{Causal Channels}
\newtheorem{theorem}{Theorem}
\newtheorem*{reftheorem}{Theorem}
\newtheorem{lemma}{Lemma}
\newcommand{\R}{\mathbb{R}}
\newcommand{\Joint}{\mathcal{J}}
\newcommand{\Cond}{\mathcal{C}}
\newcommand{\Mu}{\Joint(\Omega, \Gamma)}
\newcommand{\Nu}{\Cond(\Omega\;|\;\Gamma \times \Omega)}
\newcommand{\Obs}{\Joint(\Omega, \Omega)}
\newcommand{\Int}{\Cond(\Omega\;|\; \Omega)}
\newcommand{\intmap}{\phi_{\text{int}}}
\DeclareMathOperator*{\argmin}{arg\,min}
\DeclareMathOperator*{\sgn}{sign}
\DeclareMathOperator*{\pr}{Pr}
\DeclareMathOperator*{\Do}{do}
\DeclareMathOperator*{\supp}{supp}
\DeclareMathOperator*{\conv}{conv}
\DeclareMathOperator*{\val}{val}
\begin{document}
\begin{abstract}
    We consider causal models with two observed variables and one latent variables, each variable being discrete, with the goal of characterizing the possible distributions on outcomes that can result from controlling one of the observed variables.
    We optimize linear functions over the space of all possible interventional distributions, which allows us find properties of the interventional distribution even when we cannot uniquely identify what it is.
    We show that, under certain mild assumptions about the correlation between controlled variable and the latent variable, the resulting interventional distribution must be close to the observed conditional distribution in a quantitative sense.
    Specifically, we show that if the observed variables are sufficiently highly correlated, and the latent variable can only take on a small number of distinct values, then the variables will remain causally related after passing to the interventional distribution.
    Another result, possibly of more general interest, is a bound on the distance between the interventional distribution and the observed conditional distribution in terms of the mutual information between the controlled variable and the latent variable, which shows that the controlled variable and the latent variable must be tightly correlated for the interventional distribution to differ significantly from the observed distribution.
    We believe that this type of result may make it possible to rigorously consider `weak' experiments, where the causal variable is not entirely independent from the environment, but only approximately so.
    More generally, we suggest a connection between the theory of causality to polynomial optimization, which give useful bounds on the space of interventional distributions.
\end{abstract}
\maketitle
\section{Introduction}
The goal of this paper will be to analyze a simple model of causality using tools from mathematical optimization and see what assumptions are needed to understand causal effects from observational studies.

%
%

\subsection{Model and Overview of Results}
We consider 3 random variables ($U$, $X$, $Z$), which are discrete random variables on finite sets.
To completely specify the behavior of these variables, we would need to specify how all 3 of the variables depend probabilistically on each other, but we will only be given the joint distribution of $X$ and $Z$ is given, and the dependencies between $U$ and the other variables are unknown.

Given the joint distribution of $X$ and $Z$, there are many possible models that are consistent with this joint distribution, and the set of all of these consistent models can be defined in terms of quadratic equations and linear inequalities corresponding to the independence assumptions in the model (see section \ref{sec:setup} for details). 
The \textbf{interventional distribution}, denoted $\pr(Z | \Do(X))$ can also be defined in terms of quadratic polynomials.

Though it is impossible to uniquely determine what the causal effect of $X$ will be on $Z$ in this model, we can nevertheless try to understand the space of possible interventional distributions.
We can show that it is possible to describe the space of interventional distributions in terms of a finite collection of higher degree polynomial inequalities, but enumerating these inequalities is intractable in general.
We will mostly be concerned with linear inequalities which are valid on the space of interventional distributions.
Such linear inequalities define the convex hull of the space of interventional distributions.
We will consider three settings in which it is possible to understand this convex hull in greater detail.

Firstly, in a toy model where the two variables are `perfectly correlated', we can completely determine all possible interventional distributions that result from this observational distribution, as they will turn out to be union of a exponentially many convex polytopes. 
Secondly, we consider a situation in which $X$ and $Z$ are close to perfectly correlated, but the number of possible values the confounding variables $U$ is small, and we show that it is not possible for the causal distribution of $Z$ to be independent of $X$.
Finally, we consider a situation in which $X$ and $U$ are close to independent, in the sense that their mutual information is small, then we obtain that all interventional distributions compatible with the observed distribution are close (in an $\mathcal{L}_1$ sense) to the observed conditional distribution.

\subsection{Layout}
The layout of this paper is as follows: in section \ref{sec:setup}, we will precisely define causal models and the particular class of models we study.
In section \ref{sec:results}, we will state our main results. In section \ref{sec:geometry}, we will use geometric language to state the problems involved precisely, and describe the connection between our problems and linear programming. The remaining sections are devoted to proofs of the stated results.


\section{Background and Setup}
\label{sec:setup}

\subsection{Background}
In 2000, Judea Pearl initiated the study of causality using graphical models.\cite{pearl2009causality}
In causality theory, one considers a collection of variables, which are divided into those which can be observed, and those which are latent, and wants to know how the observed variables will interact with each other when changed by some external force.
In graphical models of causality, one assumes that the conditional distribution of one variable given all of the other variables is fixed in advance, and that there is some directed acyclic graph $G$ on the random variables, so that each variable $v$ is conditionally independent of all of the other variables given the variables in the neighborhood of $v$.
The input to a causal model is a joint distribution of the observed variables which we will call the \textbf{observed distribution}.

Given a graphical model, and two fixed variables, $X$ and $Z$, it is possible to determine whether observed distribution \textbf{uniquely determines} the interventional distribution of $X$ and $Z$.
If it does, the model is said to be \textbf{identifiable}.
When there is a unique interventional distribution, it can be found algorithmically using the methods in \cite{shpitser2006identification} and \cite{huang2012pearl}.
These identifiability results require some stringent conditions on the relationships between the observed and latent variables, which may be difficult to confirm in practice.
It was shown in \cite{schulman2016stability} that even if if graphical model is identifiable, the interventional distribution might not be stable under perturbing the observed data.
The graphical model we consider here will not be identifiable, but we will be able to understand the causal connection between the two variables without performing an experiment.

Algebraic approaches to causality have been used in the past; an early example can be found in \cite{garcia2005algebraic}, which describes an approach using algebraic geometry to model the equations defining the interventional distribution space. 
\cite{LeeSpekkens} also use algebraic methods to understand the functional dependencies between two observed variables, though they work in a different setting.
As far as we know, the usage of polynomial optimization theory in causality is novel.

\subsection{Graphical Models}
We will consider a graphical model with just 3 variables, where we think of two of them as observed, and one as latent.
We can represent this situation diagrammatically as follows:

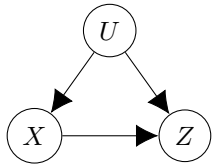
\begin{figure}[H]
    \centering
    \begin{tikzpicture}
        \tikzstyle{every node}=[draw,shape=circle];
        \node (X) at (0, 0) {$X$};
        \node (U) at (1, 1.4) {$U$};
        \node (Z) at (2, 0) {$Z$};

        \draw[-{Latex[length=3mm,width=3mm]}] (U) -- (X);
        \draw[-{Latex[length=3mm,width=3mm]}] (U) -- (Z);
        \draw[-{Latex[length=3mm,width=3mm]}] (X) -- (Z);
    \end{tikzpicture}
    \caption{Graphical Model under consideration.}
    \label{fig:name}
\end{figure}
There are three major variables which are consider in this model:
\begin{enumerate}
    \item $X$ is the variable we want to control.
    \item $Z$ is the effect variable. We want to understand how controlling $X$ will affect the distribution over $Z$.
    \item $U$ is the latent variable, or confounding variable.
\end{enumerate}

This directed acyclic graph is meant to indicate a dependency order for the variables: $X$ depends causally on $U$ and $Z$ depends causally on $X$ and $U$.
Throughout this discussion, we will assume that $X$ and $Z$ both take values in some finite set $\Omega$ with $n$ elements, and $U$ takes values in some finite set $\Gamma$ with $k$ elements.

In this setting, if $U$ can be arbitrarily correlated with $X$ and $Z$, then any observed effect of $X$ on $Z$ might vanish when $X$ is controlled, so we will examine some criteria for which we can still make some conclusion.

\subsection{Causal Models}
To specify a \textbf{causal model} over this graph, we will need to describe how the various variables depend on one another probabilistically.

For a given finite set $S$, we will use $\Delta(S)$ to denote the probability simplex parametrizing probability distributions on elements of $S$, i.e.
\[
    \Delta(S) = \{(a_i) \in \R_+^S : \sum_i a_i = 1\}
\]

Given two sets, $S_1, S_2$, a joint probability distribution on $S_1$ and $S_2$ is just a probability distribution on the product set $S_1 \times S_2$.
Hence, we define the space of joint distributions on $S_1$ and $S_2$ to be $\Joint(S_1, S_2) = \Delta(S_1 \times S_2)$.

Similarly, if we have sets $S_1$ and $S_2$, then we can consider conditional distributions of $S_1$ on $S_2$ as giving, for each element of $S_2$, a probability distribution on $S_1$.
Hence, we also define the space of conditional distributions as $\Cond(S_1|S_2) = \Delta(S_1)^{S_2}$.

We can specify a causal model by a pair $(\mu, \nu)$ where
\begin{enumerate}
    \item $\mu \in \Mu$, which we think of as being the joint distribution of $X$ and $U$. We will use both the notation $\mu_{x, u}$ and $\mu(X=x, U=u)$ to denote the joint probability that $X = x$ and $U = u$.
    \item $\nu \in \Nu$, which we think of as being the conditional distribution of $Z$ conditioned on each possible value of $X$ and $U$. We will use $\nu_{z,x,u}$ and $\nu(Z=z | X=x, U=u)$ to denote the conditional probability that $Z = z$, given that $X = x$ and $U = u$.
\end{enumerate}

\textbf{Remark:} In traditional probabilistic notation, we would use $Pr(X, U)$ to denote $\mu$ and $Pr(Z | X, U)$ to denote $\nu$.
The advantage of the traditional notation is that we can use notation such as $Pr(X | U)$ to simplify the notation for certain arithmetic operations.
We adopt this notation to make it clearer that these parameters are unknown and may change in different places.
We will use the probabilistic-style notation to state the results, and then switch to the vector-style notation in the proofs to save space.
 
A pair of model parameters, $(\mu, \nu)$ determines two important auxiliary quantities: the observational distribution $\pi \in \Joint(\Omega, \Omega)$, and the interventional distribution, $\zeta \in \Int$.

The \textbf{observational distribution} is defined as
\[
    \pi(X = x, Z=z) = \sum_{u \in \Gamma} \mu(X=x, U=u)\nu(Z=z| X=x, U=u).
\]

Given a distribution $\pi \in \Obs$, we say that the model defined by $(\mu, \nu)$ is compatible with $\pi$ if $\pi$ is in fact the observational distribution obtained by considering $\mu$ and $\nu$. To fix some notation, let 
\[
    M_{\pi} = \{(\mu, \nu) \in \Mu \times \Nu : (\mu, \nu)\text{ is compatible with }\pi\}
\]

The \textbf{interventional distribution} is defined as 
\[
    \zeta(Z=z | \Do(X=x)) = \sum_{u \in \Gamma} \mu(U=u)\nu(Z=z|X=x, U=u)
\]
A distribution $\zeta \in \Int$ is said to be compatible with a distribution $\pi \in \Obs$  if there exists some $\mu, \nu \in M_{\pi}$ so that $\zeta$ is the interventional distribution of $(\mu, \nu)$ and so that $\pi$ is the observed distribution of $(\mu, \nu)$.

Let
\[
    I_{\pi} = \{\zeta \in \Int : \zeta\text{ is compatible with }\pi\}
\]
$I_{\pi}$ denote the set all interventional distributions that are associated to a given observed distributing $\pi$.

We will give more detailed descriptions of $M_{\pi}$ and $I_{\pi}$ in section \ref{sec:geometry}.
\section{Results in Detail}
\label{sec:results}

\subsection{Perfect Channels}
Say that $X$ and $Z$ are \textbf{perfectly correlated} if for each $x, z \in \Omega$ with $x \neq z$,
\[
    \pi(X=x, Z=z) = 0.
\]
That is, in the observational distribution, $X$ and $Z$ are observed to always be equal.

In \textbf{theorem \ref{thm:perfect_full}}, we express $I_{\pi}$ as the union of $(2^{k}-1)^n$ polytopes, each of whose extreme rays can be fully specified.
In this sense, it is possible to fully characterize the space of interventional distributions in this case.
As a note, the content of this section is not needed to understand the remainder of the results.

The convex hull of $M_{\pi}$ can be described in a more compact form.
For each function $f : \Omega \rightarrow \Gamma$, define the polytope
\[
    Q_f = \{\zeta \in \Int : \forall x \in \Omega, \zeta(Z=x|\Do(X=x)) \ge \sum_{x' : f(x') = f(x)}\pi(X=x')\}
\]

\begin{reftheorem}(Theorem \ref{thm:perfect_conv})
    If $X$ and $Z$ are perfectly correlated under $\pi$, then $\bigcup_{f \in \Gamma^{\Omega}} Q_f \subseteq I_{\pi}$ and the convex hull of $M_{\pi}$ is equal to the convex hull of $\bigcup_{f \in \Gamma^{\Omega}} Q_f$.
\end{reftheorem}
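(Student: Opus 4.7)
The plan is to prove two inclusions whose combination yields the claimed equality of convex hulls; I read ``$\text{conv}(M_{\pi})$'' here as $\text{conv}(I_{\pi})$, the convex hull of the image, since otherwise the two sides live in different ambient spaces. The structural fact I will use repeatedly is that under perfect correlation, compatibility with $\pi$ forces $\nu(Z = z \mid X = x, U = u) = \mathbf{1}[z = x]$ at every $(x,u)$ with $\mu(X = x, U = u) > 0$: the identity $\pi(X = x, Z = z) = 0$ for $z \neq x$ expresses a sum of non-negative terms equaling zero, so each term must vanish.

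For the forward inclusion $\bigcup_f Q_f \subseteq I_{\pi}$, I build an explicit compatible model for each $\zeta \in Q_f$. Take $\mu(X = x, U = u) = \pi(X = x)\mathbf{1}[u = f(x)]$, so $\mu(U = u) = \sum_{x' : f(x') = u} \pi(X = x')$. Compatibility then forces $\nu(Z = z \mid X = x, U = f(x)) = \mathbf{1}[z = x]$, while for $u \neq f(x)$ the conditional is free. I would choose $\nu$ on these free coordinates to be constant in $u$ and solve explicitly so that $\sum_u \mu(U = u)\,\nu(Z = z \mid X = x, U = u) = \zeta(Z = z \mid \Do(X = x))$; feasibility amounts to the residual mass $\zeta(Z = x \mid \Do(X = x)) - \mu(U = f(x))$ being non-negative, which is exactly the defining $Q_f$ inequality.

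For the reverse inclusion $I_{\pi} \subseteq \text{conv}(\bigcup_f Q_f)$, I decompose an arbitrary $\zeta$ into pieces indexed by functions via a random-function construction. Given $(\mu, \nu) \in M_{\pi}$ realizing $\zeta$, for each $x$ with $\pi(X = x) > 0$ set $r_x(u) = \mu(U = u \mid X = x)$ and draw $F(x) \sim r_x$ independently, producing weights $w(f) = \prod_x r_x(f(x))$. For each $f$ in the support of $w$, define $\mu_f$ as in the forward inclusion and keep $\nu_f = \nu$; the structural observation ensures $(\mu_f, \nu_f) \in M_{\pi}$. Linearity gives $\mathbb{E}_F[\mu_F(U = u)] = \mu(U = u)$, whence $\sum_f w(f)\zeta_f = \zeta$, and the fact that $\nu(Z = x \mid X = x, U = f(x)) = 1$ on the support of $w$ yields $\zeta_f(Z = x \mid \Do(X = x)) \geq \mu_f(U = f(x)) = \sum_{x' : f(x') = f(x)} \pi(X = x')$, placing $\zeta_f \in Q_f$.

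The main obstacle is the corner case $\pi(X = x) = 0$: at such $x$ the compatibility condition gives no constraint on $\nu(\cdot \mid X = x, U = u)$, while the $Q_f$ inequality at $x$ can still be non-trivial (when $f^{-1}(f(x))$ meets $\supp(\pi_X)$) and need not hold for $\zeta_f$ under the natural decomposition. The cleanest fix is either to assume full $X$-marginal support for $\pi$ (a mild genericity condition likely intended), or to interpret the universal quantifier in the definition of $Q_f$ as ranging over $\supp(\pi_X)$; with either adjustment the argument above goes through uniformly.
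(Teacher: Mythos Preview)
Your proposal is correct (modulo the zero-marginal corner case you flag, which the paper also does not address), and the forward inclusion $Q_f\subseteq I_\pi$ is essentially identical to the paper's construction in the proof of Theorem~\ref{thm:perfect_conv}: both take $\mu=\mu^f$ and solve explicitly for $\nu$ on the free coordinates $u\neq f(x)$, with the $Q_f$ inequality being exactly the nonnegativity condition for the residual.

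Where you genuinely diverge is in the reverse inclusion $\conv(I_\pi)\subseteq\conv\big(\bigcup_f Q_f\big)$. The paper reaches this via a structural detour: it first proves the exact decomposition $I_\pi=\bigcup_{S}Q_S$ over all supports $S\subseteq\Omega\times\Gamma$ (Theorem~\ref{thm:perfect_full}), shows each $Q_S$ is a polytope, and then argues that every extreme point of every $Q_S$ has the form $\intmap(\mu^f,\nu^g)$ with $g$ compatible with $S_f$, hence lies in some $Q_f$. Your random-function coupling is more direct: sampling $F(x)\sim\mu(\,\cdot\mid X=x)$ independently and writing $\zeta=\sum_f w(f)\,\zeta_f$ with $\zeta_f=\intmap(\mu^f,\nu)$ bypasses the $Q_S$ machinery and the extreme-point analysis entirely. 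The paper's route buys more---it gives an exact (not merely convex-hull) description of $I_\pi$ and identifies all extreme points---while yours is shorter, self-contained, and makes the convex-combination weights explicit. Both are valid proofs of the stated theorem; yours is arguably the cleaner one if the convex hull is all that is wanted.
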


Note that in particular, it is possible for $\zeta(Z=x|\Do(X=x)$ to be as small as $\frac{1}{k}$ for each $x \in \Omega$, even though from the observed distribution, we might expect $\zeta(Z=x|\Do(X=x)) = 1$.

Especially interesting are the extreme points of $I_{\pi}$.
In some senses, these are the interventional distributions which are maximally far away from the observational distribution, and thus are maximally `bad'.
All extreme points of $I_{\pi}$ correspond to models where $X$ and $U$ are \textbf{maximally correlated}, in the sense that for each $x \in \Omega$, there is a unique $u \in \Gamma$ so that $\mu_{x, u} > 0$.

\subsection{$\epsilon$-Perfect Channels and Diagonal Functionals}
We wish to weaken the condition that $X$ and $U$ be perfectly correlated to the case when rather than having $X$ and $Z$ be perfectly correlated, we instead have that there is some $\epsilon$ so that for each $x \in \Omega$,
\[
    \pi(Z=x|X=x)=\frac{\pi_{x,x}}{\pi_x} \ge 1-\epsilon
\]
We cannot describe all possible interventional distributions in this setting, but we can at least try to argue that it is not possible for the interventional distribution to be completely independent of $X$.
We say that an interventional distribution $\zeta$ is independent of $X$ if for each $x, x' \in \Omega$, and any $z \in \Omega$, we have that
\[
    \zeta(Z=z|X=x) = \zeta(Z=z|X=x').
\]

Recall that we assume that $|\Omega| = n$ and $|\Gamma| = k$, so that $X$ and $Z$ both take on $n$ values, but $U$ takes on at most $k$ values.

We can show the following:
\begin{lemma}\label{lem:eps_perf}
    If $X$ and $Z$ are $\epsilon$-perfectly correlated, and for each $x \in \Omega$, $\epsilon < 1-\frac{k}{n^2 \pi(X=x)}$, then for any $\zeta \in I_{\pi}$, $\zeta$ is not independent of $X$.
\end{lemma}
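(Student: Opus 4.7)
The plan is a contradiction argument based on the single linear functional $\zeta \mapsto \sum_x \zeta(Z = x \mid \Do(X=x))$. Assume some $\zeta \in I_{\pi}$ is independent of $X$, so that $\zeta(Z=z \mid \Do(X=x)) = q_z$ does not depend on $x$; then this functional evaluates to $\sum_x q_x = 1$. I will show that the $\epsilon$-perfect hypothesis together with the smallness of $k$ forces the same quantity to be strictly larger than $1$.

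Pick a witnessing pair $(\mu, \nu) \in M_{\pi}$ and abbreviate $a_u = \mu(U=u)$ and $v_u^x = \mu(X=x \mid U=u)$. Since
\[
    q_x = \sum_u a_u\, \nu(Z = x \mid X=x, U=u)
\]
is a sum of nonnegative quantities, each summand satisfies $a_u\, \nu(Z = x \mid X=x, U=u) \le q_x$. Plugging this into
\[
    \pi(X=x, Z=x) = \sum_u a_u\, v_u^x\, \nu(Z=x \mid X=x, U=u)
\]
gives the pointwise bound $\pi(X=x, Z=x) \le q_x S_x$, where $S_x := \sum_u v_u^x$. Because each $v_u$ is a probability distribution on $\Omega$, summing $\sum_x v_u^x = 1$ over the $k$ values of $u$ yields the conservation law $\sum_x S_x = k$. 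Combining with the $\epsilon$-perfect hypothesis $\pi(X=x, Z=x) \ge (1-\epsilon)\pi(X=x)$ gives the per-$x$ estimate $q_x \ge (1-\epsilon)\pi(X=x)/S_x$.

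Summing in $x$, using $\sum_x q_x = 1$, invoking the lemma's hypothesis $(1-\epsilon)\pi(X=x) > k/n^2$ pointwise, and finally applying the AM--HM inequality $\sum_x 1/S_x \ge n^2/\sum_x S_x = n^2/k$ yields the chain
\[
    1 = \sum_x q_x \ge (1-\epsilon)\sum_x \frac{\pi(X=x)}{S_x} > \frac{k}{n^2}\sum_x \frac{1}{S_x} \ge 1,
\]
a strict contradiction. The step I expect to be the genuine obstacle is identifying the right pairing: the naive estimate $S_x \le k$ yields only $\sum_x q_x \ge (1-\epsilon)/k$, which is useless, and bounding $\|\pi(\cdot \mid X=x) - q\|_1 \le \|\mu(U \mid X=x) - \mu(U)\|_1$ via the triangle inequality gives the looser threshold $\epsilon < 1/k - 1/n$. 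The insight is that $\sum_x S_x = k$ is a tight conservation law, and AM--HM applied to $1/S_x$ extracts exactly the factor $n^2/k$ needed to close the inequality at equality against the normalization $k/n^2$ built into the hypothesis on $\epsilon$.
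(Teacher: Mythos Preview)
Your proof is correct and genuinely more elementary than the paper's. The paper deduces the lemma from Theorem~\ref{thm:diagonal_bound}, which is itself proved via the dual of the $\mu$-linear program: one chooses dual variables $\alpha_{x,x}=\ell_x\mu_{g(x)}/\mu_{g(x),x}$ with $g(x)=\argmin_u \mu_u/\mu_{u,x}$, groups the resulting sum by the fibers $g^{-1}(u)$, applies AM--HM within each fiber, and then uses convexity of $\sum_u |g^{-1}(u)|^2$ to extract the factor $n^2/k$. Your argument bypasses the LP duality entirely: assuming independence, the trivial bound ``each summand is at most the sum'' gives $a_u\,\nu(Z{=}x\mid X{=}x,U{=}u)\le q_x$, whence $\pi_{x,x}\le q_x S_x$ with $S_x=\sum_u \mu(X{=}x\mid U{=}u)$; the conservation law $\sum_x S_x\le k$ together with a single global AM--HM on the $S_x$ closes the contradiction. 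The tradeoff is that the paper's route proves the stronger statement $L(\zeta)\ge (n^2/k)\min_x \pi_{x,x}$ for \emph{every} $\zeta\in I_\pi$, whereas your argument is tailored to the case where $\zeta$ is independent of $X$; on the other hand, your proof is self-contained and needs neither the dual program nor the auxiliary function $g$.
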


This is an immediate corollary of the next theorem, which provides an explicit hyperplane separator between $I_{\pi}$ and the set of interventional distributions independent of $X$.

We will consider linear functionals on the space of interventional distributions of the following form:
\[
    L(\zeta) = \sum_{x \in \Omega} \ell_x \zeta(Z=x|X=x)
\]
where $\ell_x \ge 0$ for all $x \in \Omega$.
We will call such a linear functional a diagonal functional.

If $L$ is such a diagonal functional, then let $\supp(L) = \{x \in \Omega : \ell_x > 0\}$.

\begin{reftheorem}(Theorem \ref{thm:diagonal_bound})
    Let $L$ be a diagonal functional, and let $\pi$ be some observed distribution.
    Let $\supp(L) = \{x \in \Omega : \ell_x > 0\}$, then for some $x \in \supp(L)$,
    \[
        L(\zeta) \ge \frac{|\supp(L)|^2}{k} \ell_x \pi(X=x, Z=x)
    \]
    for all $\zeta \in I_{\pi}$.
\end{reftheorem}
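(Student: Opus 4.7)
The plan is to produce a pointwise lower bound on $\zeta(Z=x \mid \Do(X=x))$ for each $x \in S := \supp(L)$ in terms of a ``leverage'' quantity $M_x := \max_{u : \mu(U=u) > 0} \mu(X=x, U=u)/\mu(U=u)$, and then combine these per-$x$ bounds via Cauchy--Schwarz. For any compatible $(\mu, \nu) \in M_{\pi}$ and any $u$ with $\mu(U=u) > 0$, the definition gives $\mu(X=x, U=u) \le M_x\, \mu(U=u)$; multiplying by $\nu(Z=x \mid X=x, U=u) \ge 0$ and summing over $u$ yields
\[
    \pi(X=x, Z=x) = \sum_u \mu(X=x, U=u)\,\nu(Z=x \mid X=x, U=u) \le M_x\, \zeta(Z=x \mid \Do(X=x)),
\]
so $\zeta(Z=x \mid \Do(X=x)) \ge \pi(X=x, Z=x)/M_x$ whenever $M_x > 0$.

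Next I bound the $M_x$'s collectively using that $|\Gamma| = k$. For each $u$ with $\mu(U=u) > 0$, $\sum_x \mu(X=x, U=u)/\mu(U=u) = 1$; bounding the max by the sum and swapping the order of summation gives
\[
    \sum_{x \in S} M_x \le \sum_{x \in S}\sum_u \frac{\mu(X=x, U=u)}{\mu(U=u)} = \sum_u \frac{1}{\mu(U=u)}\sum_{x \in S}\mu(X=x, U=u) \le k.
\]

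With these two ingredients in hand, I apply Cauchy--Schwarz in the form $(\sum_x a_x)^2 \le (\sum_x a_x^2/b_x)(\sum_x b_x)$ with $a_x = \sqrt{\ell_x\, \pi(X=x, Z=x)}$ and $b_x = M_x$ for $x \in S$ to obtain
\[
    \left(\sum_{x \in S}\sqrt{\ell_x\, \pi(X=x, Z=x)}\right)^2 \le \left(\sum_{x \in S}\frac{\ell_x\, \pi(X=x, Z=x)}{M_x}\right)\left(\sum_{x \in S} M_x\right) \le k\, L(\zeta),
\]
where the last step combines the per-$x$ lower bound on $\zeta(Z=x \mid \Do(X=x))$ with $\sum_{x \in S} M_x \le k$. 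Choosing $x^* \in \argmin_{x \in S} \ell_x\, \pi(X=x, Z=x)$, every summand on the left is at least $\sqrt{\ell_{x^*}\, \pi(X=x^*, Z=x^*)}$, so the left side is at least $|S|^2\, \ell_{x^*}\, \pi(X=x^*, Z=x^*)$, and rearranging yields the desired bound.

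The conceptual heart of the argument is the pigeonhole inequality $\sum_{x \in S} M_x \le k$, which is where the latent cardinality enters: intuitively, $U$ can be strongly correlated with the event $\{X = x\}$ for at most a ``$k$'s worth'' of $x$-values, so large diagonal observational mass $\pi(X=x,Z=x)$ over many $x \in S$ must translate, in aggregate, into substantial interventional probability. Edge cases are harmless: if $M_x = 0$ then $\mu(X=x, U=u) = 0$ for all $u$, forcing $\pi(X=x, Z=x) = 0$ and making that term vanish; and if $\mu(U=u) = 0$ then $\mu(X=x, U=u) = 0$, so such $u$ can be dropped from the sums without affecting either side.
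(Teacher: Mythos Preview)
Your proof is correct, and it takes a genuinely different route from the paper's.  Both arguments hinge on the same per-$x$ lower bound: with $M_x = \max_u \mu_{x,u}/\mu_u$ (equivalently, $1/M_x = \mu_{g(x)}/\mu_{g(x),x}$ for the paper's $g(x) = \argmin_u \mu_u/\mu_{u,x}$), one has $\zeta_{x,x} \ge \pi_{x,x}/M_x$, and hence $L(\zeta) \ge \sum_{x\in S} \ell_x\pi_{x,x}/M_x$.  The paper obtains this via LP duality for the $\mu$-linear program $P_\mu$ with a specific choice of dual variables; you obtain it by a two-line direct estimate, which is cleaner.  The second halves also differ: the paper groups the sum by the fibers $g^{-1}(u)$, applies AM--HM to get $\sum_{x\in g^{-1}(u)} 1/\mu_{u,x} \ge |g^{-1}(u)|^2/\mu_u$, and then convexity of $t\mapsto t^2$ to get $\sum_u |g^{-1}(u)|^2 \ge |S|^2/k$.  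You instead prove the pigeonhole bound $\sum_{x\in S} M_x \le k$ (via $\max \le \text{sum}$ and swapping the order of summation) and then run a single Cauchy--Schwarz.  Your argument is more elementary and self-contained; the paper's route has the advantage of situating the result inside the general LP framework of Section~\ref{sec:linear-program}, so that the same machinery also yields Theorem~\ref{thm:mutual_info}.  Note incidentally that your Cauchy--Schwarz step actually gives a slightly stronger intermediate inequality, $k\,L(\zeta) \ge \bigl(\sum_{x\in S}\sqrt{\ell_x\pi_{x,x}}\bigr)^2$, before you weaken it to match the stated bound.
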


\begin{proof}(of lemma \ref{lem:eps_perf})
    Let $L$ be the linear functional so that $\ell_x = 1$ for each $x \in \Omega$, then under the assumptions of the corollary, we obtain that for any $\zeta \in L(\zeta)$, 
   \[
       L(\zeta) > 1
   \]
   On the other hand, if $\zeta$ is independent of $X$, then it can easily be seen that $L(\zeta) = 1$.
\end{proof}

\subsection{Bounded Mutual Information Between $X$ and $U$}
The previous results are concern situations when $k$ is much less than $n$, the distribution over $X$ is sufficiently close to being uniformly distributed, and $Z$ and $X$ are highly correlated.

We consider another result, which does not depend on $k$, nor on the relationship between $X$ and $Z$ but does depend on the mutual information between $X$ and $U$, a measure of the correlation between $X$ and $U$.
It may often be plausible to make this assumption in real world situations.

Given an observed distribution $\pi \in \Obs$, we will define $\eta \in \Int$ be the interventional distribution so that 
\[
    \eta(Z=z | X=x) = \pi(Z=z|X=x)
\]
That is, this is conditional distribution of $Z$ on $X$ in the observed distribution.

Given distributions $A \in \Delta(S_1)$ and $B \in \Delta(S_2)$, we also define their product to be $A \otimes B \in \Delta(S_1 \times S_2)$ where $(A \otimes B)_{i,j} = A_iB_j$.

If $\mu \in \Mu$ is some joint distribution on $\Omega \times \Gamma$ and we have that $\mu_X$ is the marginal distribution on $X$ and $\mu_U$ is the marginal on $U$, then we define the correlation distance of $\mu$ to be 
\[
    \delta(\mu) = \|\mu - \mu_X \otimes \mu_U\|_1
\]
where $\|\cdot\|_1$ is the $\mathcal{L}_1$ norm of a vector.

\begin{reftheorem}(Theorem \ref{thm:mutual_info})
    Suppose that $(\mu, \nu) \in M_{\pi}$ are model parameters with $\pi$ as their observed distribution. Let $\zeta$ be the associated interventional distribution.  Then,
    \[
        \|\zeta - \eta\|_1 \le \frac{1}{\min_{x} \pi(X=x)}\delta(\mu)
    \]

\end{reftheorem}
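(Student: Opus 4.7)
The plan is to prove the bound by direct calculation: write both $\eta$ and $\zeta$ as explicit expressions in $(\mu,\nu)$, take the difference, and control it with the triangle inequality. The key algebraic observation is that the difference factors in a way that decouples $\nu$ (which only contributes factors summing to $1$) from the quantity $\mu-\mu_X\otimes\mu_U$ whose $\mathcal{L}_1$ norm is exactly $\delta(\mu)$.

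First, I would note that marginalizing the definition of $\pi$ over $z$ gives
\[
    \pi(X=x) \;=\; \sum_{z\in\Omega}\sum_{u\in\Gamma}\mu(X=x,U=u)\,\nu(Z=z|X=x,U=u) \;=\; \mu_X(x),
\]
since the inner sum over $z$ of $\nu$ equals $1$. In particular $\eta(Z=z|X=x)=\frac{1}{\mu_X(x)}\sum_u\mu(X=x,U=u)\nu(Z=z|X=x,U=u)$, while $\zeta(Z=z|\Do(X=x))=\sum_u\mu_U(u)\nu(Z=z|X=x,U=u)$. Subtracting and clearing denominators,
\[
    \zeta(Z=z|\Do(X=x))-\eta(Z=z|X=x)\;=\;\frac{1}{\mu_X(x)}\sum_{u\in\Gamma}\bigl[\mu_X(x)\mu_U(u)-\mu(X=x,U=u)\bigr]\,\nu(Z=z|X=x,U=u).
\]

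Next, I would apply the triangle inequality, sum over $z\in\Omega$, and use $\sum_z\nu(Z=z|X=x,U=u)=1$ to get
\[
    \sum_{z\in\Omega}\bigl|\zeta(Z=z|\Do(X=x))-\eta(Z=z|X=x)\bigr|\;\le\;\frac{1}{\mu_X(x)}\sum_{u\in\Gamma}\bigl|\mu_X(x)\mu_U(u)-\mu(X=x,U=u)\bigr|.
\]
Summing this inequality over $x\in\Omega$ and bounding $\frac{1}{\mu_X(x)}\le\frac{1}{\min_{x'}\mu_X(x')}=\frac{1}{\min_{x'}\pi(X=x')}$ uniformly in $x$ yields
\[
    \|\zeta-\eta\|_1\;\le\;\frac{1}{\min_{x}\pi(X=x)}\sum_{x,u}\bigl|\mu(X=x,U=u)-\mu_X(x)\mu_U(u)\bigr|\;=\;\frac{\delta(\mu)}{\min_{x}\pi(X=x)},
\]
which is the claimed bound.

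The argument is essentially a one-line computation once one sees the right factoring, so there is no real obstacle; the only thing to be careful about is that the marginal of $\mu$ over $U$ agrees with $\pi(X=\cdot)$, which lets the $\frac{1}{\mu_X(x)}$ factor become the uniform constant $\frac{1}{\min_x\pi(X=x)}$. Note that this proof never uses any structural property of $\nu$ beyond that it is a conditional distribution, which explains why the bound is independent of $k=|\Gamma|$ and of the marginal relationship between $X$ and $Z$.
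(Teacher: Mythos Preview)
Your proof is correct and is genuinely more elementary than the paper's. The paper establishes this theorem as an application of its $\mu$-linear-program machinery: it picks $\ell_{x,z}=-\sgn((\zeta-\eta)_{x,z})$ so that $L(\zeta)-L(\eta)=-\|\zeta-\eta\|_1$, then invokes the dual bound $\val(P_\mu)\ge f(\mu,\alpha)$ from Lemma~\ref{lmma:dual_bound} with the specific choice $\alpha_{x,z}=\ell_{x,z}/\pi_x$, and finally estimates $|L(\eta)-f(\mu,\alpha)|$ by essentially the same arithmetic you carry out. Your direct factoring of $\zeta_{x,z}-\eta_{x,z}$ followed by the triangle inequality bypasses the LP/duality detour entirely and is both shorter and more transparent for this particular result. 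What the paper's route buys is uniformity: the same dual-variable framework is what drives the diagonal-functional bound (Theorem~\ref{thm:diagonal_bound}), so presenting the mutual-information theorem this way illustrates that both results flow from a common optimization template. But as a standalone proof of the stated inequality, your argument is preferable.
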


From this fact, it is an easy consequence of Pinsker's inequality (see \cite[Chapter 17]{cover1999elements}) that
\begin{lemma}
    Suppose that $(\mu, \nu) \in M_{\pi}$ are model parameters with $\pi$ as their observed distribution. Let $\zeta$ be the associated interventional distribution.  Then,
    \[
        \|\zeta - \eta\|_1 \le \frac{1}{\min_{x} \pi(X=x)}\sqrt{\frac{1}{2}I(X; U)}
    \]
\end{lemma}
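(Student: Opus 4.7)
The plan is to derive this bound as an immediate consequence of Theorem \ref{thm:mutual_info}, using Pinsker's inequality to convert the $\mathcal{L}_1$ correlation distance $\delta(\mu)$ into a bound in terms of mutual information. The only real work is to identify the correct distributions to which Pinsker should be applied and to track the normalization constants carefully.

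First, I would invoke Theorem \ref{thm:mutual_info} directly, which gives
\[
    \|\zeta - \eta\|_1 \le \frac{1}{\min_x \pi(X=x)}\, \delta(\mu) = \frac{1}{\min_x \pi(X=x)}\, \|\mu - \mu_X \otimes \mu_U\|_1.
\]
This reduces the lemma to showing that $\|\mu - \mu_X \otimes \mu_U\|_1 \le \sqrt{\tfrac{1}{2} I(X; U)}$. The key observation is that the mutual information is, by definition,
\[
    I(X; U) = D_{\mathrm{KL}}(\mu \,\|\, \mu_X \otimes \mu_U),
\]
i.e.\ exactly the KL divergence between the pair of distributions whose $\mathcal{L}_1$ distance appears on the right-hand side of the theorem.

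Second, I would apply Pinsker's inequality (as stated in Cover and Thomas, Chapter 17) to $P = \mu$ and $Q = \mu_X \otimes \mu_U$. In the convention where the $\mathcal{L}_1$ norm is used in place of twice the total variation distance, Pinsker's inequality reads $\|P - Q\|_1 \le \sqrt{\tfrac{1}{2} D_{\mathrm{KL}}(P\,\|\,Q)}$, which applied to our pair yields $\delta(\mu) \le \sqrt{\tfrac{1}{2} I(X; U)}$. Substituting into the bound from Theorem \ref{thm:mutual_info} gives the result.

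There is no real obstacle beyond being careful about which normalization convention for total variation distance matches the one implicit in Theorem \ref{thm:mutual_info} and in $\delta(\mu)$; since both sides of Theorem \ref{thm:mutual_info} are stated in $\|\cdot\|_1$, the constants are consistent and the argument is a two-line citation-plus-substitution.
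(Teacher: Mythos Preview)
Your proposal is correct and follows exactly the approach the paper intends: the paper states this lemma as an ``easy consequence of Pinsker's inequality'' applied to Theorem~\ref{thm:mutual_info}, and even records the needed form $\delta(\mu)\le\sqrt{\tfrac{1}{2}I(U;X)}$ explicitly before the theorem. Your two-line substitution is precisely what the paper has in mind.
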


This makes formal the intuitive idea that if $U$ and $X$ must be highly correlated if the causal distribution is significantly different from the observed distribution.

\section{Geometry of Interventional Distributions}
\label{sec:geometry}
A set is said to be  \textbf{simply-semialgebraic} if it can be defined as the set of points in $\R^n$ satisfying a finite set of polynomial inequalities.
A set is said to be \textbf{semialgebraic} if it is the union of finitely many simply-semialgebraic sets. The probability simplex for example is defined as 
\[
    \Delta(S) = \{(a_i)_{i \in S} : \forall i\;a_i \ge 0,\; \sum_{i\in S}a_i = 1\}
\]
Given an observed distribution $\pi \in \Obs$, we can obtain a semialgebraic description of the set of possible model parameters $(\mu, \nu)$ which are consistent with this observed distribution,
\[
    M_{\pi} = \{(\mu, \nu) \in \Mu \times \Nu : \forall z, x,\; \pi_{z,x} = \sum_{u} \mu_{u,x}\nu_{z,x,u}\}
\]

It is clear that this is also a simply-semialgebraic set.
In particular, it is defined entirely in terms of linear inequalities and quadratic equations.

We can define the interventional map, $\intmap: M_{\pi} \rightarrow I_{\pi}$ where $\intmap(\mu, \nu)$ is the distribution $\zeta$ so that
\[
\zeta_{x,z} = \sum_{u\in \Gamma} \mu_{u} \nu_{z,x,u}
\]
It is clear that $I_{\pi}$ is the image of this map.

It is a consequence of the Tarski-Seidenberg theorem (see \cite{bochnak2013real} for a reference) that $I_{\pi}$ is also semialgebraic. 
We will see in the next section on perfect channels that it is possible for $I_{\pi}$ to require exponentially many inequalities to define, and thus that we should not expect it to be tractable to obtain the full description of $I_{\pi}$ in terms of polynomial inequalities.

One observation of note is that if $\mu$ is \textbf{fixed}, then all of the above inequalities are linear in $\nu$.  This implies that if we fix a particular $\mu \in \Mu$, then the slice 
\[
    M_{\pi, \mu} = \{\nu \in \Nu : (\mu, \nu) \in M_{\pi}\}
\]
is a polytope, and the map $\intmap$ restricts to a linear map from $M_{\pi, \mu}$ to $I_{\pi}$.
The analogous facts also hold if we consider slices with fixed $\nu$.

The study of such parameterized polytopes is sometimes called geometric combinatorics, though we will not use any facts about this here \cite[Chapter 5]{blekherman2012semidefinite}.

\subsection{Linear Functionals}
We might attempt to find the maximum and minimum values of a linear functional $L$ over $I_{\pi}$.
Specifically, consider 
\[
    \min_{\zeta} \{L(\zeta) : \zeta \in I_{\pi, \mu}\} = \min_{\mu, \nu} \{L(\intmap(\mu, \nu)) : (\mu, \nu) \in M_{\pi, \mu}\}
\]
By expanding out this definition in terms of $\mu$ and $\nu$, we obtain the following optimization problem:
\begin{align*}
    \text{minimize }&\sum_{x \in \Omega} \sum_{z \in \Omega} \sum_{u \in \Gamma} \ell_{z, x} \left(\sum_{x' \in \Omega} \mu_{u,x'}\right) \nu_{z,x,u}\\
    \text{such that }& \forall x \in \Omega, \forall u \in \Gamma, \sum_{z \in \Omega} \nu_{z, x, u} = 1\\
                     &\forall x, z \in \Omega, \sum_{u \in \Gamma} \mu_{x, u} \nu_{z, x, u} = \pi_{x, z}\\
                     &\forall x,z \in \Omega, \forall u \in \Gamma, \nu_{z,x,u} \ge 0
\end{align*}
Because all of the constraints and objective are either linear or quadratic, this program is in fact a quadratically constrained quadratic program (\textbf{QCQP}).
By our previous observations, we note that if $\mu \in \Mu$ is fixed, and we regard this as a program purely in terms of $\nu$, we obtain a linear program.
A QCQP with this property is said to be bipartite bilinear \cite{dey2019new}, and it is possible to obtain both semidefinite programming and second order cone relaxations of this optimization problem.

For the purposes of our analysis, the linear program which is obtained by fixing $\mu \in \Mu$ will be important, so we define the $\mu-$\textbf{linear program} to be
\[
    P_{\mu} := 
     \min_{\nu} \{L(\intmap(\mu, \nu)) : (\mu, \nu) \in M_{\pi, \mu}\}
\]

In section \ref{sec:linear-program}, we will provide some results about the nature of this $\mu$-linear program, and in particular compute its dual program.

\section{Perfect Channels}
In this section, we will consider an observational distribution which has no noise in the following sense.

Let $\pi \in \Obs$ have the property that for all $x\neq z \in \Omega$, $\pi_{x,z} = 0$.
Such an observational distribution will be called a perfect channel.

In this case, we will seek a complete characterization of $M_{\pi}$ and $I_{\pi}$ as the finite union of polytopes in \textbf{theorem \ref{thm:perfect_full}}.
We will also obtain a slightly more compact representation of the extreme points of $I_{\pi}$ in theorem \textbf{theorem \ref{thm:perfect_conv}}.
The polytopes in this decomposition will correspond to different patterns of zeros in the distribution $\mu$.

For any $\mu \in \Mu$, let the support of $\mu$ be 
\[
    \sigma(\mu) = \{(x,u) \in \Omega \times \Gamma : \mu_{x,u} > 0\}.
\]
Now, fix some set $S \subseteq \Omega \times \Gamma$, so that for each $x \in \Omega$, there is some $u \in \Gamma$ with $(x,u) \in S$.
Then let 
\[
    \Sigma_S = \{\mu \in \Mu : \forall x\in \Omega, \mu_x = \pi_x, \text{ and }\sigma(\mu) \subseteq S\}, \text{ and}
\]
\[
    N_{S} = \{\nu \in \Nu : \forall (x, u) \in S, \forall z \in \Omega,\;\nu_{z,x,u} = 1_{z = x}\}.
\]
Observe that any face of $\Delta(\Omega \times \Gamma)$ is of the form $\Sigma_S$ for some $S$.

Let $Q_S = \{\intmap(\mu, \nu) : \mu \in \Sigma_S, \nu \in N_S\}$.
Our main result in this section is that 
\begin{theorem}\label{thm:perfect_full}
    \[
        I_{\pi} = \bigcup_{S \subseteq \Omega \times \Gamma} Q_S, 
    \]
    and $Q_S$ is a convex polytope.
\end{theorem}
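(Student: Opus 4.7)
The plan is to split the proof into three pieces: the containment $I_{\pi} \subseteq \bigcup_S Q_S$, the reverse containment, and the assertion that each $Q_S$ is a polytope. For the first, I would start from an arbitrary $\zeta \in I_{\pi}$ realized by some $(\mu, \nu) \in M_{\pi}$, set $S = \sigma(\mu)$, and exploit the perfect channel identity
\[
0 = \pi_{x,z} = \sum_u \mu_{x,u}\, \nu_{z,x,u} \quad (x \neq z).
\]
Because every summand is non-negative, $\mu_{x,u} > 0$ forces $\nu_{z,x,u} = 0$ for all $z \neq x$; since $\nu_{\cdot,x,u}$ is a probability distribution on $\Omega$, this yields $\nu_{x,x,u} = 1$, which is exactly the defining condition of $N_S$. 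That $\mu \in \Sigma_S$ follows from $\sigma(\mu) = S$ together with the routine marginalization identity $\mu_x = \sum_z \pi_{x,z} = \pi_x$.

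For the reverse inclusion, I would directly verify $(\mu, \nu) \in M_{\pi}$ whenever $\mu \in \Sigma_S$ and $\nu \in N_S$: the compatibility identity $\pi_{x,z} = \sum_u \mu_{x,u}\nu_{z,x,u}$ splits into the diagonal case $z = x$, where the right-hand side collapses to $\mu_x = \pi_x = \pi_{x,x}$, and the off-diagonal case $z \neq x$, where every summand vanishes because either $\mu_{x,u} = 0$ (on $(x,u) \notin S$, by $\sigma(\mu) \subseteq S$) or $\nu_{z,x,u} = 0$ (on $(x,u) \in S$, by the definition of $N_S$ and $z \neq x$).

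The main obstacle is showing that each $Q_S$ is actually a polytope, since $\intmap$ is genuinely bilinear and the image of a polytope under a bilinear map is not in general even convex. The plan is to reparameterize so that only linear couplings remain. Because $\nu_{\cdot,x,u}$ is pinned to $\mathbf{1}_{z=x}$ on $(x,u) \in S$, the only bilinear interaction happens on $(x,u) \notin S$; I would absorb this by introducing auxiliary variables $\lambda_{x,z,u} := \mu_u \nu_{z,x,u}$ for $(x,u) \notin S$. This recasts the interventional distribution as
\[
\zeta_{x,z} = \mathbf{1}_{z=x} \sum_{u:(x,u)\in S} \mu_u \;+\; \sum_{u:(x,u)\notin S} \lambda_{x,z,u},
\]
subject to the linear constraints $\lambda \ge 0$ and $\sum_z \lambda_{x,z,u} = \mu_u$ (encoding stochasticity of $\nu_{\cdot,x,u}$), plus $\mu \in \Sigma_S$. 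Every feasible $(\mu, \lambda)$ lifts back to a genuine $(\mu, \nu) \in \Sigma_S \times N_S$: when $\mu_u > 0$ set $\nu_{z,x,u} = \lambda_{x,z,u}/\mu_u$, and when $\mu_u = 0$ the constraint forces $\lambda_{\cdot,x,u} \equiv 0$, so any choice of $\nu_{\cdot,x,u}$ will do. This exhibits $Q_S$ as the linear image of an explicit polytope in $(\mu, \lambda)$-space, hence a polytope itself.
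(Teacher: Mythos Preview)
Your argument for the equality $I_{\pi} = \bigcup_S Q_S$ is essentially identical to the paper's Lemma~\ref{lmma:support}: both exploit that in a perfect channel the identity $0 = \sum_u \mu_{x,u}\nu_{z,x,u}$ (for $x\neq z$) pins $\nu_{\cdot,x,u}$ to the indicator $\mathbf{1}_{z=x}$ on the support of $\mu$, and you make the converse direction slightly more explicit than the paper does.

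Where you diverge is in establishing that $Q_S$ is a polytope. The paper proceeds in two steps: first it proves convexity of $Q_S$ directly (Lemma~\ref{lmma:poly}), by taking $\mu' = t\mu_1 + (1-t)\mu_2$ and handcrafting a compatible $\nu'$; then it shows (Lemma~\ref{lmma:extreme_pts}) via bilinearity that every extreme point of $Q_S$ is the image of a pair of extreme points of $\Sigma_S$ and $N_S$, whence $Q_S$ has only finitely many extreme points. Your route is instead a linearization: you freeze the $S$-coordinates of $\nu$ and absorb the remaining bilinear terms into new variables $\lambda_{x,z,u} = \mu_u\nu_{z,x,u}$, obtaining $Q_S$ as the linear image of the polytope
\[
\bigl\{(\mu,\lambda) : \mu \in \Sigma_S,\ \lambda \ge 0,\ \textstyle\sum_z \lambda_{x,z,u} = \mu_u \text{ for }(x,u)\notin S\bigr\}.
\]
This is correct (the feasible set is bounded since $0 \le \lambda_{x,z,u} \le \mu_u \le 1$, and your back-substitution covers the degenerate case $\mu_u = 0$), and it yields convexity and polyhedrality in one stroke. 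The trade-off is that the paper's approach, while longer, explicitly identifies the extreme points of $Q_S$ as $\intmap(\mu,\nu)$ with $\mu,\nu$ extreme in $\Sigma_S, N_S$; this feeds directly into the subsequent analysis of the extreme points of $I_\pi$ (Theorem~\ref{thm:perfect_conv}). Your lifted description would recover the same information, but only after enumerating the vertices of the $(\mu,\lambda)$-polytope and projecting.
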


To prove this theorem, we state a few lemmas.

\begin{lemma}\label{lmma:support}
    If $\pi$ is a perfect channel, then $(\mu, \nu) \in M_{\pi}$ if and only if for some $S \subseteq \Omega \times \Gamma$, $\mu \in \Sigma_S$ and $\nu \in N_S$.
    In particular, $I_{\pi} = \bigcup_{S \subseteq \Omega \times \Gamma} Q_S$.
\end{lemma}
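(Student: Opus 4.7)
My plan is a direct argument driven by a single observation: under the perfect-channel hypothesis $\pi_{x,z} = 0$ for $x \neq z$, the compatibility equation $\pi_{x,z} = \sum_u \mu_{x,u}\nu_{z,x,u}$ becomes a sum of nonnegative terms equal to zero whenever $z \neq x$, which pins down $\nu$ completely at every coordinate where $\mu$ is positive. The second sentence of the lemma (the identity $I_\pi = \bigcup_S Q_S$) will then follow immediately by applying $\intmap$ to both sides of the equivalence, using $Q_S = \intmap(\Sigma_S \times N_S)$, so the substance of the argument is the characterization of $M_\pi$.

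For the forward direction, given $(\mu, \nu) \in M_\pi$, I would take $S = \sigma(\mu)$ and check the two membership conditions. The marginal identity $\mu_x = \pi_x$ required by $\Sigma_S$ drops out of summing compatibility over $z$ together with $\sum_z \nu_{z,x,u} = 1$. For the $N_S$ condition, fix $(x,u) \in S$, so that $\mu_{x,u} > 0$, together with any $z \neq x$: nonnegativity and $\pi_{x,z} = 0$ force the single term $\mu_{x,u}\nu_{z,x,u}$ to vanish, hence $\nu_{z,x,u} = 0$; combined with the fact that $\nu_{\cdot,x,u}$ is a probability distribution, this yields $\nu_{z,x,u} = 1_{z=x}$, which is precisely the $N_S$ condition.

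For the reverse direction, given $\mu \in \Sigma_S$ and $\nu \in N_S$, I would expand $\sum_u \mu_{x,u}\nu_{z,x,u}$ and discard the terms where $\mu_{x,u} = 0$; each surviving term has $(x,u) \in \sigma(\mu) \subseteq S$, so by the definition of $N_S$ the factor $\nu_{z,x,u}$ equals $1_{z=x}$. The sum therefore collapses to $\mu_x \cdot 1_{z=x} = \pi_x \cdot 1_{z=x}$, which matches $\pi_{x,z}$ by the perfect-channel assumption, giving $(\mu, \nu) \in M_\pi$. Applying $\intmap$ to the resulting equivalence then yields $I_\pi = \bigcup_S Q_S$.

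The only genuine obstacle is the covering requirement ``for each $x \in \Omega$, some $(x,u) \in S$'': if $\pi_x = 0$ for some $x$, then $\sigma(\mu)$ misses the fiber $\{x\} \times \Gamma$ and is not eligible to serve as $S$. The cleanest way to handle this is to restrict $\Omega$ to the support of $\pi$, which loses no information since $M_\pi$ and $I_\pi$ are insensitive to values of $x$ with $\pi_x = 0$; alternatively one enlarges $S$ by an arbitrary choice of $(x,u_x)$ for each such $x$ and edits $\nu$ at those causally irrelevant coordinates. This bookkeeping is the only step in the proof that is not automatic.
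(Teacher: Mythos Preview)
Your proposal is correct and follows essentially the same route as the paper: both hinge on the observation that $\pi_{x,z}=0$ for $x\neq z$ forces each term $\mu_{x,u}\nu_{z,x,u}$ to vanish, so $\nu_{z,x,u}=1_{z=x}$ wherever $\mu_{x,u}>0$, and then $S=\sigma(\mu)$ does the job. If anything you are more careful than the paper---you spell out the reverse implication and you flag the covering-condition edge case when some $\pi_x=0$, neither of which the paper treats explicitly.
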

\begin{proof}
    Interpreted probabilistically, this states the following: if the pair $(x,u)$ appears with nonzero probability in the observed distribution, then because $Z = X$ with probability 1 in the observed distribution, we must have that $\nu_{x,x,u} = 1$ in the observed distribution as well.

    Formally, for $x \neq z$, note that
    \[
        \pi_{x, z} = \sum_{u' \in \Gamma} \mu_{x, u'} \nu_{z,x,u'} = 0
    \]

    Because $\mu, \nu \ge 0$, we have that $\pi_{x,z} = 0$ iff for each $u \in \Gamma$, either $\mu_{x,u} = 0$ or $\nu_{z,x,u} = 0$.
    This implies that if $\mu_{x,u} > 0$, then $\nu_{z,x,u} = 0$.

    Formally, for $x = z$, note that
    \[
        \pi_{x, z} = \sum_{u' \in \Gamma} \mu_{x, u'} \nu_{z,x,u'} = 1
    \]
    Again, because we have that $0\le \nu_{z,x,u'} \le 1$, we have that if $\mu_{x,u'} > 0$, then $\nu_{z,x,u'} =1$.

    This, combined with the equation of marginals $\mu_x = \pi_x$, is equivalent to the condition that $(\mu, \nu) \in M_{\pi}$.

    Therefore, 
    \[
        M_{\pi} = \bigcup_{S \subseteq \Omega \times \Gamma} (\Sigma_S \times N_S),
    \]
    It is then clear that 
    \[
        I_{\pi} = \bigcup_{S \subseteq \Omega \times \Gamma} \intmap(\Sigma_S, N_S)
    \]
\end{proof}

We now show that $Q_S$ is a convex polytope.
\begin{lemma}
    \label{lmma:poly}
    $Q_S$ is convex.
\end{lemma}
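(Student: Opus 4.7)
The plan is a direct constructive approach. Given two points $\zeta_1, \zeta_2 \in Q_S$ with witnesses $(\mu_1, \nu_1), (\mu_2, \nu_2) \in \Sigma_S \times N_S$ and $\lambda \in [0,1]$, I will exhibit a pair $(\mu, \nu) \in \Sigma_S \times N_S$ such that $\intmap(\mu, \nu) = \lambda \zeta_1 + (1-\lambda)\zeta_2$.

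The first step is to take $\mu = \lambda \mu_1 + (1-\lambda)\mu_2$. The $X$-marginal condition $\mu_x = \pi_x$ is affine and so preserved, and the support satisfies $\sigma(\mu) \subseteq \sigma(\mu_1) \cup \sigma(\mu_2) \subseteq S$, so $\mu \in \Sigma_S$.

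The subtler step is the choice of $\nu$. Because $\intmap$ couples $\nu$ with the $U$-marginal $\mu_u = \sum_x \mu_{x,u}$, a naive convex combination $\lambda \nu_1 + (1-\lambda)\nu_2$ will not land on the target $\zeta$. Instead I set
\[
    \nu_{z,x,u} = \frac{\lambda (\mu_1)_u (\nu_1)_{z,x,u} + (1-\lambda)(\mu_2)_u (\nu_2)_{z,x,u}}{\mu_u}
\]
when $\mu_u > 0$, and $\nu_{z,x,u} = 1_{z=x}$ when $\mu_u = 0$. Since the coefficients $\lambda (\mu_i)_u / \mu_u$ are nonnegative and sum to one, each $\nu_{\cdot, x, u}$ is a convex combination of the probability distributions $(\nu_i)_{\cdot, x, u}$, hence itself a valid conditional distribution. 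For $(x,u) \in S$, both $(\nu_1)_{z,x,u}$ and $(\nu_2)_{z,x,u}$ equal $1_{z=x}$, so $\nu_{z,x,u} = 1_{z=x}$ as well (the fallback branch is chosen to match), giving $\nu \in N_S$.

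The final step is a one-line verification: clearing the denominator in the definition of $\nu$ yields the identity $\mu_u \nu_{z,x,u} = \lambda (\mu_1)_u (\nu_1)_{z,x,u} + (1-\lambda)(\mu_2)_u (\nu_2)_{z,x,u}$, which also holds trivially when $\mu_u = 0$; summing over $u$ gives $\intmap(\mu, \nu) = \lambda \intmap(\mu_1,\nu_1) + (1-\lambda)\intmap(\mu_2,\nu_2)$. The main conceptual point is that the $\mu_u$-weighted rescaling is forced by the bilinearity of $\intmap$; the only real obstacle is the bookkeeping for the degenerate case $\mu_u = 0$, and this is mild since the numerator vanishes whenever the denominator does.
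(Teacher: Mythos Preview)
Your proof is correct. Both you and the paper take $\mu = \lambda\mu_1 + (1-\lambda)\mu_2$, but the two proofs construct the witness $\nu$ differently. The paper defines, for $(x,u)\notin S$,
\[
    \nu'_{z,x,u} = \frac{\zeta'_{x,z} - \sum_{u' : (x,u') \in S}\mu_{u'} 1_{z=x}}{\sum_{u' : (x,u') \not \in S} \mu_{u'}},
\]
a value that depends only on $(x,z)$ and is obtained by explicitly solving for what $\nu'$ must be to hit the target $\zeta'$; this forces a separate computation to verify nonnegativity. Your construction instead takes the $\mu_u$-weighted convex combination $\nu_{z,x,u} = (\lambda(\mu_1)_u(\nu_1)_{z,x,u} + (1-\lambda)(\mu_2)_u(\nu_2)_{z,x,u})/\mu_u$, which makes membership in $\Nu$ and in $N_S$ immediate and turns the verification of $\intmap(\mu,\nu)=\zeta'$ into a one-line identity. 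Your route is shorter and exploits the bilinearity of $\intmap$ more directly; the paper's route has the minor side benefit of producing a $\nu'$ that is constant in $u$ off $S$, though that plays no role in the argument. Both handle the degenerate zero-denominator case in the same way, by noting the corresponding term contributes nothing to $\intmap$.
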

\begin{proof}
    Suppose that $\zeta_1 = \intmap(\mu_1,\nu_1)$ and $\zeta_2=\intmap(\mu_2,\nu_2)$, and let $\zeta' = t \zeta_1 + (1-t)\zeta_2$. We want to show that there are $\mu' \in \Sigma_S$ and $\nu' \in N_S$ so that $\zeta' = \intmap(\mu', \nu')$.

    We will let $\mu' = t \mu_1 + (1-t)\mu_2$, and because $\Sigma_S$ is convex, $\mu' \in \Sigma_S$. We define $\nu'$ as follows: 

    If $(x,u) \in S$, then for each $z \in \Omega$, let $\nu'_{z,x,u} = 1_{z=x}$. As long as this is the case, and $\nu'$ defines a valid element of $\Nu$, then $\nu'$ also defines an element of $N_S$.

    Defining $\nu'_{z,x,u}$ if $(x,u) \not \in S$ is a little more intricate. Fix $(x,u) \not \in S$. Let $1_{z=x}$ be the indicator function for $z = x$, and consider
    \[
        \nu'_{z,x,u} = \frac{\zeta'_{x,z} - \sum_{u' : (x,u') \in S}\mu_u 1_{z=x}}{\sum_{u' : (x,u') \not \in S} \mu_u}
    \]

    We need to check that $\nu'_{z,x,u} \ge 0$ for each $z,x,u$ and that $\sum_{z \in \Omega}\nu'_{z,x,u} = 1$.

    To check nonnegativity, note that
    \[
        (\zeta_1)_{z,x} = \sum_{u' \in \Gamma} (\mu_1)_{u'} (\nu_1)_{z,x,u'}
    \]
    \[
        (\zeta_2)_{z,x} = \sum_{u' \in \Gamma} (\mu_2)_{u'} (\nu_2)_{z,x,u'}
    \]
    Thus, because $\nu_1, \nu_2 \in N_S$,
    \[
        (\zeta_1)_{z,x} - \sum_{u' : (x,u') \in S} (\mu_1)_{u'} 1_{z=x} = \sum_{u' : (x,u') \not \in S} (\mu_1)_{u'} (\nu_1)_{z,x,u'} \ge 0
    \]
    \[
        (\zeta_2)_{z,x} - \sum_{u' : (x,u') \in S} (\mu_2)_{u'}1_{z=x} = \sum_{u' : (x,u') \not \in S} (\mu_2)_{u'} (\nu_2)_{z,x,u'} \ge 0
    \]

    So,
    \[
        \zeta'_{x,z} - \sum_{u : (x,u) \in S}\mu_u 1_{z=x} = t((\zeta_1)_{x,z} -  \sum_{u : (x,u) \in S}(\mu_1)_u 1_{z=x}) + (1-t)((\zeta_1)_{x,z} -  \sum_{u : (x,u) \in S}(\mu_1)_u 1_{z=x}) \ge 0
    \]

    To check that this sums to 1, note that
    \begin{align*}
        \sum_{z \in \Omega} \nu'_{z,x,u} &= \sum_{z\in \Omega}\left(\frac{\zeta'_{x,z} - \sum_{u : (x,u) \in S}\mu_u 1_{z=x}}{\sum_{u : (x,u) \not \in S} \mu_u}\right)\\
                             &= \frac{\sum_{z\in \Omega}\left(\zeta'_{x,z} - \sum_{u : (x,u) \in S}\mu_u 1_{z=x}\right)}{\sum_{u : (x,u) \not \in S} \mu_u}\\
                             &= \frac{1 - \sum_{u : (x,u) \in S}\mu_u}{\sum_{u : (x,u) \not \in S} \mu_u}\\
                             &= 1
    \end{align*}
    As desired.
\end{proof}

\begin{lemma}\label{lmma:extreme_pts}
    Every extreme point of $Q_S$ is of the form $\intmap(\mu, \nu)$, where $\mu$ and $\nu$ are extreme points of $\Sigma_S$ and $N_S$, respectively.
\end{lemma}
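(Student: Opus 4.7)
The plan is to use the bilinearity of $\intmap$ together with an iterated slicing argument: first fix $\nu$ and slice in $\mu$, then fix the resulting extremal $\mu$ and slice in $\nu$. Throughout, I rely on the fact that $\Sigma_S$ and $N_S$ are compact convex polytopes (each is cut out from a simplex by linear equalities), and that $\intmap(\mu,\nu)_{x,z} = \sum_{u} \mu_u \nu_{z,x,u}$ is separately linear in $\mu$ (through the marginal $\mu_u = \sum_{x'} \mu_{x',u}$) and in $\nu$.

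Let $\zeta^\ast$ be an extreme point of $Q_S$. Since $\zeta^\ast \in Q_S$, pick any representative $(\mu_0,\nu_0) \in \Sigma_S \times N_S$ with $\intmap(\mu_0,\nu_0) = \zeta^\ast$. Define
\[
    F = \{\mu \in \Sigma_S : \intmap(\mu,\nu_0) = \zeta^\ast\}.
\]
This is a nonempty compact convex set (the intersection of $\Sigma_S$ with an affine subspace), so it has an extreme point $\mu^\ast$. I claim $\mu^\ast$ is extreme in $\Sigma_S$. If instead $\mu^\ast = \tfrac{1}{2}(\mu_1 + \mu_2)$ with $\mu_1,\mu_2 \in \Sigma_S$ and $\mu_1 \neq \mu_2$, then by linearity of $\intmap(\cdot,\nu_0)$,
\[
    \zeta^\ast = \tfrac{1}{2}\intmap(\mu_1,\nu_0) + \tfrac{1}{2}\intmap(\mu_2,\nu_0),
\]
with both summands in $Q_S$. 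Extremality of $\zeta^\ast$ forces $\intmap(\mu_i,\nu_0) = \zeta^\ast$ for $i = 1,2$, so $\mu_1, \mu_2 \in F$. But $\mu^\ast$ is extreme in $F$, so $\mu_1 = \mu_2 = \mu^\ast$, a contradiction.

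Next, I repeat the argument on the $\nu$-side with $\mu^\ast$ now held fixed. Let
\[
    G = \{\nu \in N_S : \intmap(\mu^\ast,\nu) = \zeta^\ast\},
\]
which is again a nonempty (it contains $\nu_0$) compact convex set, and let $\nu^\ast$ be an extreme point of $G$. If $\nu^\ast = \tfrac{1}{2}(\nu_1+\nu_2)$ with $\nu_1,\nu_2 \in N_S$, then linearity of $\intmap(\mu^\ast,\cdot)$ and extremality of $\zeta^\ast$ in $Q_S$ imply $\intmap(\mu^\ast,\nu_i) = \zeta^\ast$, so $\nu_i \in G$, and extremality of $\nu^\ast$ in $G$ gives $\nu_1 = \nu_2 = \nu^\ast$. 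Hence $\nu^\ast$ is extreme in $N_S$, and $\zeta^\ast = \intmap(\mu^\ast,\nu^\ast)$ is the desired representation.

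The only genuinely delicate point is keeping track of the fact that extremality transfers from the slice $F$ (resp.\ $G$) to the whole polytope $\Sigma_S$ (resp.\ $N_S$); this is exactly where bilinearity and extremality of $\zeta^\ast$ combine, since a nontrivial convex decomposition of $\mu^\ast$ in $\Sigma_S$ would be carried by $\intmap(\cdot,\nu_0)$ to a nontrivial decomposition of $\zeta^\ast$ unless the pieces already lie in the slice $F$. Aside from this, each step is a routine application of Krein--Milman to a polytope cut out by linear equalities.
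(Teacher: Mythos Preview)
Your argument is correct. Both the paper and you exploit the bilinearity of $\intmap$ and the fact that $\Sigma_S$, $N_S$ are compact polytopes, but the routes differ. The paper simply decomposes an arbitrary representative $(\mu,\nu)$ into convex combinations of extreme points, $\mu=\sum_i t_i\mu^i$ and $\nu=\sum_j s_j\nu^j$, and then expands $\zeta=\intmap(\mu,\nu)=\sum_{i,j}t_is_j\intmap(\mu^i,\nu^j)$; since every point of $Q_S$ is thus a convex combination of points of the desired form, any extreme point must coincide with one of the summands. Your proof instead runs a two-stage slicing: first freeze $\nu_0$ and use extremality of $\zeta^\ast$ in $Q_S$ to upgrade an extreme point of the slice $F$ to an extreme point of $\Sigma_S$, then repeat with $\mu^\ast$ fixed. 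The paper's version is shorter and avoids the transfer-of-extremality step entirely; your version, while more laborious here, has the mild advantage that it would still go through if one of the factors were only known to be compact convex (via Krein--Milman) rather than a polytope with an explicit finite extreme-point decomposition.
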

\begin{proof}
    Let $\zeta = \intmap(\mu, \nu)$.

    Because $\mu \in \Sigma_S$, we can write $\mu$ as a convex combination of extreme points of $\Sigma_S$.
    \[
        \mu = \sum_{i=1}^k t_i\mu^i
    \]
    where $\sum_{i=1}^k t_i = 1$ and $t_i \ge 0$ for each $i$, and each $\mu^i$ is an extreme point of $\Sigma_S$.
    
    Similarly, 
    \[
        \nu = \sum_{i=1}^{\ell} s_i\nu_i
    \]
    where $s_i \ge 0$, $\sum_{i=1}^k s_i$, and $\nu_i$ are extreme points of $N_S$.

    From this, and the bilinearity of $\intmap$, we obtain that
    \[
        \zeta = \intmap(\mu,\nu) = \sum_{i=1}^k\sum_{j=1}^{\ell} t_is_j \intmap(\mu_i,\nu_j)
    \]

    Thus, $\zeta$ can be written as a convex combination of points of the form $\intmap(\nu, \mu)$, where $\nu$ and $\mu$ are both extreme points. Thus, the only extreme points of $Q_S$ are those of the desired form.
\end{proof}

\begin{proof}(of theorem \ref{thm:perfect_full})
    $\Sigma_S$ and $N_S$ are both polytopes, since they are defined by finitely many linear inequalities, so they each have finitely many extreme points, and thus, by lemmas  \ref{lmma:poly} and \ref{lmma:extreme_pts}, $Q_S$ is a convex set with finitely many extreme points and thus is a convex polytope.

    The theorem follows from lemma \ref{lmma:support}.
\end{proof}

\subsection{Extreme Points of $I_{\pi}$}

We can be somewhat more explicit about what the extreme points of $\Sigma_S$ and $N_S$ are.

A function $f : \Omega \rightarrow \Gamma$ is \textbf{contained in $S \subseteq \Omega \times \Gamma$} if $(x,f(x)) \in S$ for each $x \in \Omega$. We define $\mu^f$ to be the point
\[
    (\mu^f)_{x, u} = \pi_x 1_{u = f(x)}
\]
Analogously, if $f : \Omega \times \Gamma \rightarrow \Omega$ is a function, then we say that $f$ is \textbf{compatible with $S$} if for all $(x,u) \in \Gamma$, $f(x,u) = x$, and we let
\[
    (\nu^f)_{z, x, u} = 1_{z = f(x,u)}
\]

\begin{lemma}
    Every extreme point of $\Sigma_S$ is of the form $\mu^f$ where $f$ is a function contained in $S$.
\end{lemma}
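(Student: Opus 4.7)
The plan is to exploit the fact that $\Sigma_S$ decomposes as a Cartesian product of simplices indexed by $x \in \Omega$. The constraints defining $\Sigma_S$ are: (i) nonnegativity $\mu_{x,u} \ge 0$, (ii) vanishing $\mu_{x,u} = 0$ whenever $(x,u) \notin S$, and (iii) the marginal equalities $\sum_{u \in \Gamma} \mu_{x,u} = \pi_x$ for each $x \in \Omega$. None of these constraints couple coordinates across different values of $x$, so $\Sigma_S$ is naturally isomorphic to $\prod_{x \in \Omega} T_x$, where $T_x = \{(a_u)_{u \in \Gamma} : a_u \ge 0,\ a_u = 0 \text{ if } (x,u) \notin S,\ \sum_u a_u = \pi_x\}$.

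Next, I would invoke the standard fact that the extreme points of a Cartesian product of polytopes are precisely the tuples of extreme points of the factors. Each $T_x$ is a scaled standard simplex on the coordinate set $\{u \in \Gamma : (x,u) \in S\}$ (which is nonempty by the assumption that such a $u$ exists for every $x$), so its extreme points are the vectors $\pi_x \cdot e_{u_x}$ for some $u_x$ with $(x, u_x) \in S$. When $\pi_x = 0$ the factor $T_x$ collapses to a single point, namely the origin, which is then trivially the unique extreme point and is consistent with any choice of $u_x$ in $\{u : (x,u) \in S\}$.

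Combining these, an extreme point $\mu$ of $\Sigma_S$ is specified by choosing, for each $x \in \Omega$, some $u_x \in \Gamma$ with $(x, u_x) \in S$, which is exactly the data of a function $f : \Omega \to \Gamma$ contained in $S$ via $f(x) = u_x$. The corresponding point is
\[
    \mu_{x,u} = \pi_x \cdot 1_{u = f(x)} = (\mu^f)_{x,u},
\]
as desired.

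I do not expect any real obstacle; the proof is essentially bookkeeping once the product structure is observed. The only mild subtlety is the case $\pi_x = 0$, where multiple choices of $f(x)$ produce the same $\mu^f$, but this is a harmless redundancy in the parametrization. As an alternative, one could give a direct perturbation argument: if $\mu \in \Sigma_S$ is not of the form $\mu^f$, then for some $x$ there are $u_1 \neq u_2$ with $(x, u_1), (x, u_2) \in S$ and $\mu_{x,u_1}, \mu_{x,u_2} > 0$; adding $\pm \epsilon$ to these two coordinates for small enough $\epsilon > 0$ yields two distinct points in $\Sigma_S$ whose midpoint is $\mu$, contradicting extremality.
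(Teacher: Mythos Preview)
Your argument is correct. The product decomposition $\Sigma_S \cong \prod_{x \in \Omega} T_x$ is valid because the defining constraints of $\Sigma_S$ (nonnegativity, support contained in $S$, and the marginal conditions $\sum_u \mu_{x,u} = \pi_x$) do not couple different values of $x$; the remaining simplex constraint $\sum_{x,u}\mu_{x,u}=1$ is implied by $\sum_x \pi_x = 1$, so it introduces no additional coupling. From there, the identification of extreme points of each $T_x$ as the scaled unit vectors $\pi_x e_{u_x}$ with $(x,u_x)\in S$ is standard, and assembling these choices into a function $f$ contained in $S$ yields exactly $\mu^f$. Your treatment of the degenerate case $\pi_x = 0$ and the alternative perturbation argument are both fine.

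As for comparison: the paper states this lemma without proof, so there is no argument in the paper to compare against. Your write-up would serve as a complete proof of the stated lemma.
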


\begin{lemma}
    Every extreme point of $N_S$ is of the form $\nu^f$ where $f$ is a function compatible with $S$.
\end{lemma}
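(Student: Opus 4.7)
The plan is to exhibit $N_S$ explicitly as a Cartesian product of standard simplices, and then invoke the fact that the extreme points of a product of convex sets are products of extreme points of the factors.

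More concretely, I would first unpack the definition: an element $\nu \in N_S$ is a tuple of vectors $\nu_{\cdot, x, u} \in \Delta(\Omega)$, one for each pair $(x,u) \in \Omega \times \Gamma$, subject to the additional constraint that $\nu_{\cdot, x, u} = e_x$ (the indicator vector at $x$) whenever $(x,u) \in S$. Thus the defining constraints decouple across the index pairs $(x,u)$, and we get a natural identification
\begin{equation*}
    N_S \;\cong\; \prod_{(x,u)\in S} \{e_x\} \;\times\; \prod_{(x,u) \notin S} \Delta(\Omega).
\end{equation*}

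Next I would apply the standard fact that the extreme points of a finite product of convex sets are precisely the tuples of extreme points of the factors. The singleton factors contribute nothing, while each factor $\Delta(\Omega)$ has extreme points $\{e_z : z \in \Omega\}$. Hence an extreme point of $N_S$ is obtained by picking, for each $(x,u) \notin S$, some $z(x,u) \in \Omega$, and for each $(x,u) \in S$ setting $z(x,u) := x$. Defining $f : \Omega \times \Gamma \to \Omega$ by $f(x,u) := z(x,u)$ gives a function that equals $x$ on $S$, i.e.\ a function compatible with $S$, and the associated extreme point is exactly $\nu^f$ in the notation of the excerpt.

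There is essentially no obstacle here; the only thing to be careful about is that the extreme-point-of-product lemma really applies, which requires only that the ambient space decomposes as a direct sum matching the product decomposition of the constraint set. This is immediate from the indexing. A symmetric argument handles the preceding lemma for $\Sigma_S$, with the only difference being that $\Sigma_S$ is described by the marginal constraints $\mu_x = \pi_x$ together with the support constraint $\sigma(\mu) \subseteq S$, so it is a product (over $x \in \Omega$) of scaled simplices on $\{u : (x,u) \in S\}$, whose vertices correspond to choices $f(x) \in \Gamma$ with $(x,f(x)) \in S$.
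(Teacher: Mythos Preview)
Your argument is correct: the decomposition of $N_S$ as a product of simplices (with singleton factors on $S$) together with the standard fact that extreme points of a product are products of extreme points yields exactly the claim. The paper itself states this lemma without proof, treating it as an elementary fact about products of simplices, so your write-up is in fact more detailed than what the paper provides.
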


In particular, we see that all of the extreme points of $I_{\pi}$ are of the form $\intmap(\mu_f, \nu_g)$, where there is a set $S$ so that $f$ is contained in $S$ and $g$ is compatible with $S$.

Let $S_f = \{(x,f(x)) : x \in \Omega\}$, and notice that if $f$ is contained in $S$, and $g$ is compatible with $S$, then $g$ is compatible with $S_f \subseteq S$. Therefore, there is a set $S$ so that $f$ is contained in $S$ and $g$ is compatible with $S$ if and only if $g$ is compatible with $S_f$.

In particular, for a function $f \in \Gamma^{\Omega}$, let
\[
    Q_f = \intmap(\Sigma_{S_f}, N_{S_f})
\]
$\conv(I_{\pi}) \subseteq \conv(\bigcup_{f \in \Gamma^\Omega} Q_f)$, where this union is over all functions from $\Omega$ to $\Gamma$.

Finally, we can characterize the inequalities defining $\intmap(\Sigma_{S_f}, N_{S_f})$.
\begin{theorem}\label{thm:perfect_conv}
    If $\zeta \in \Int$, then $\zeta \in \intmap(\Sigma_{S_f}, N_{S_f})$ if and only if 
    \[
        \zeta_{x,x} \ge \sum_{x' \in \Omega : f(x) = f(x')} \pi_x
    \]
\end{theorem}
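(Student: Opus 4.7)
The plan is to observe first that $\Sigma_{S_f}$ collapses to a single point, which turns $\intmap(\Sigma_{S_f}, N_{S_f})$ into the image of the polytope $N_{S_f}$ under a single linear map, and then to reduce the theorem to a one-equation-per-row linear feasibility problem. Concretely, since $S_f = \{(x, f(x)) : x \in \Omega\}$ contains exactly one pair for each $x \in \Omega$, the marginal constraint $\mu_x = \pi_x$ together with $\supp(\mu) \subseteq S_f$ forces $\mu_{x,u} = \pi_x \mathbf{1}_{u=f(x)}$. Denote this unique point by $\mu^f$; its $\Gamma$-marginal is $\mu^f_u = \sum_{x' : f(x') = u} \pi_{x'}$.

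For the "only if" direction, I would just compute. Any $\nu \in N_{S_f}$ satisfies $\nu_{x,x,f(x)} = 1$, so
\[
\zeta_{x,x} \;=\; \sum_{u \in \Gamma} \mu^f_u \,\nu_{x,x,u} \;\ge\; \mu^f_{f(x)}\,\nu_{x,x,f(x)} \;=\; \sum_{x' : f(x') = f(x)} \pi_{x'},
\]
which is the claimed inequality (the sum indexed by $\pi_x$ in the statement being the usual typo for $\pi_{x'}$).

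For the "if" direction, given $\zeta \in \Int$ satisfying the inequalities, I would construct an explicit $\nu \in N_{S_f}$ realizing $\zeta = \intmap(\mu^f, \nu)$. The pairs with $u = f(x)$ are forced: set $\nu_{z,x,f(x)} = \mathbf{1}_{z=x}$. For the remaining pairs $(x,u)$ with $u \ne f(x)$, the equation $\intmap(\mu^f,\nu) = \zeta$ requires
\[
\sum_{u \ne f(x)} \mu^f_u\, \nu_{z,x,u} \;=\; \zeta_{x,z} - \mu^f_{f(x)} \mathbf{1}_{z=x}.
\]
I would then mimic the averaging trick used in lemma \ref{lmma:poly}: assign the same distribution in $z$ to every $u \ne f(x)$, namely
\[
\nu_{z,x,u} \;=\; \frac{\zeta_{x,z} - \mu^f_{f(x)} \mathbf{1}_{z=x}}{\sum_{u' \ne f(x)} \mu^f_{u'}}.
\]
Nonnegativity of the numerator comes for free when $z \ne x$ and is exactly the hypothesis of the theorem when $z = x$. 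Summing the numerator over $z$ gives $1 - \mu^f_{f(x)} = \sum_{u' \ne f(x)} \mu^f_{u'}$, which matches the denominator, so $\sum_z \nu_{z,x,u} = 1$, making $\nu$ a valid element of $\Nu$ and, by construction, of $N_{S_f}$.

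The only step requiring extra care is the degenerate case $\sum_{u' \ne f(x)} \mu^f_{u'} = 0$, where the denominator above vanishes. I would handle this separately by noting that in this case $\mu^f_{f(x)} = 1$, so the inequality $\zeta_{x,x} \ge 1$ together with $\zeta \in \Int$ forces $\zeta_{x,z} = \mathbf{1}_{z=x}$; then the free entries $\nu_{z,x,u}$ for $u \ne f(x)$ can be set to any probability vector on $\Omega$ (they contribute zero to $\intmap$ since $\mu^f_u = 0$). I do not expect any serious obstacle here — the content of the theorem is really the observation that $\Sigma_{S_f}$ is a singleton, which is what makes $Q_f$ a genuine polytope described by these simple "diagonal" linear inequalities rather than by the higher-degree inequalities that describe $I_{\pi}$ in general.
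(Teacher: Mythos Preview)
Your proposal is correct and follows essentially the same route as the paper: observe that $\Sigma_{S_f}=\{\mu^f\}$, compute $\zeta_{x,x}\ge \mu^f_{f(x)}$ for the forward direction, and for the converse define $\nu_{z,x,u}$ to be $\mathbf{1}_{z=x}$ when $u=f(x)$ and $(\zeta_{x,z}-\mu^f_{f(x)}\mathbf{1}_{z=x})/(1-\mu^f_{f(x)})$ otherwise, which is exactly your formula since $\sum_{u'\ne f(x)}\mu^f_{u'}=1-\mu^f_{f(x)}$. Your treatment of the degenerate case $\mu^f_{f(x)}=1$ is a welcome addition that the paper's proof glosses over.
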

\begin{proof}
    Notice that if $\mu \in \Sigma_{S_f}$ if and only if $\mu = \mu^f$, and that for each $x \in \Omega$,
    \[
        \mu^f_{f(x)} = \sum_{x' \in \Omega : f(x) = f(x')} \pi_x
    \]

    To show one direction, let $\zeta = \intmap(\mu, \nu)$, where $\mu \in \Sigma_{S_f}$, and $\nu \in N_{S_f}$, then
    \begin{align*} 
        \zeta_{x,x}&=\sum_{u \in \Gamma} \mu_u \nu_{x,x,u}\\
                   &\ge\mu_{f(x)} \nu_{x,x,f(x)}\\
                   &= \sum_{x' \in \Omega : f(x) = f(x')} \pi_{x'}
    \end{align*}

    It remains to show that if $\zeta$ satisfies these inequalities, then it is in $\intmap(\Sigma_{S_f}, N_{S_f})$. 
    Let 
    \[
        \nu_{z,x,u} =
        \begin{cases}
            1_{z=x} \text{ if }u=f(x)\\
            \frac{\zeta_{z,x} - \mu^f_{f(x)}1_{z=x}}{1 - \mu^f_{f(x)}} \text{ otherwise }
        \end{cases}
    \]
    We want to show that this value of $\nu$ is in $N_{S_f}$, and $\intmap(\mu^f, \nu) = \zeta$. 

    To see that this is in $\Nu$, notice that it is nonnegative for each entry, and it is a simple computation to see that
    \[
        \sum_{z \in \Omega} \nu_{z,x,u} = 1
    \]

    That it is in $N_{S_f}$ also follows easily from the definition.

    On the other hand, let $\zeta' = \intmap(\mu^f, \nu)$, so that
    \begin{align*} 
    \zeta'_{z,x}&=\sum_{u} \mu_u \nu_{z,x,u}\\
            &=\mu_{f(x)}1_{z=x} + \sum_{u \neq f(x)} \mu_u \left(\frac{\zeta_{z,x} - \mu^f_{f(x)}1_{z=x}}{1 - \mu^f_{f(x)}}  \right)\\
            &=\mu_{f(x)}1_{z=x} + \left(\sum_{u \neq f(x)} \mu_u\right) \left(\frac{\zeta_{z,x} - \mu^f_{f(x)}1_{z=x}}{1 - \mu^f_{f(x)}}  \right)\\
            &= \zeta_{z,x}
    \end{align*}
    This gives the desired result.
\end{proof}

%
\section{Analysis of the $\mu$-Linear Program}\label{sec:linear-program}
Recall from above that for any $\mu \in \Mu$, we have an associated linear program, $P_{\mu}$, defined as
\begin{align*}
    \text{minimize }&\sum_{x \in \Omega} \sum_{z \in \Omega} \sum_{u \in \Gamma} \ell_{z, x} \mu_u \nu_{z,x,u}\\
    \text{such that }& \forall x \in \Omega, \forall u \in \Gamma, \sum_{z \in \Omega} \nu_{z, x, u} = 1\\
                     &\forall x, z \in \Omega, \sum_{u \in \Gamma} \mu_{x, u} \nu_{z, x, u} = \pi_{x, z}\\
                     &\forall x,z \in \Omega, \forall u \in \Gamma, \nu_{z,x,u} \ge 0
\end{align*}
\begin{lemma}
    For each $\mu \in \Mu$, $P_{\mu}$ is bounded and feasible.
\end{lemma}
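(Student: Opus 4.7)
The plan is to handle boundedness and feasibility separately; both are essentially immediate once we write down the right feasible point.

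For boundedness, I would argue directly from the constraints: the equality $\sum_{z \in \Omega} \nu_{z,x,u} = 1$ together with $\nu_{z,x,u} \ge 0$ forces $0 \le \nu_{z,x,u} \le 1$ coordinatewise, so the entire feasible region lies in the compact box $[0,1]^{|\Omega|^2 |\Gamma|}$. Since the objective $\sum_{x,z,u} \ell_{z,x} \mu_u \nu_{z,x,u}$ is linear and the feasible region is a subset of this bounded box, the LP cannot be unbounded in either direction. This reduces boundedness to remarking that a linear function is bounded on any bounded set.

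For feasibility, observe first that the equality constraints $\sum_u \mu_{x,u} \nu_{z,x,u} = \pi_{x,z}$ summed over $z$ force $\mu_x = \pi_x$ for every $x$; so the implicit assumption (which we inherit from $P_\mu$ being defined relative to $\pi$) is that $\mu$'s marginal on $\Omega$ agrees with $\pi$'s. Under this assumption, I would exhibit the canonical feasible point
\[
    \nu_{z,x,u} = \frac{\pi_{x,z}}{\pi_x} \quad \text{for all } u \in \Gamma,
\]
whenever $\pi_x > 0$; this is simply the observed conditional distribution $\pi(Z \mid X)$ declared to be independent of $U$. Then nonnegativity is immediate, $\sum_z \nu_{z,x,u} = \sum_z \pi_{x,z}/\pi_x = 1$, and
\[
    \sum_{u \in \Gamma} \mu_{x,u} \nu_{z,x,u} = \frac{\pi_{x,z}}{\pi_x} \sum_{u \in \Gamma} \mu_{x,u} = \frac{\pi_{x,z}}{\pi_x}\mu_x = \pi_{x,z},
\]
using $\mu_x = \pi_x$ in the last step. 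So the full set of constraints is satisfied.

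The only loose end is the edge case $\pi_x = 0$ for some $x$, but this is straightforward: in this case $\mu_{x,u} = 0$ for every $u$ (since $\sum_u \mu_{x,u} = \mu_x = \pi_x = 0$ and $\mu \ge 0$), so the equality constraint $\sum_u \mu_{x,u} \nu_{z,x,u} = \pi_{x,z} = 0$ is automatically satisfied regardless of how we pick $\nu_{z,x,u}$; we just fill in an arbitrary probability distribution on $\Omega$ for each $(x,u)$ with $\pi_x = 0$ (e.g.\ point mass on $z = x$). Neither step is a real obstacle — the lemma is essentially an observation that the observed conditional distribution, interpreted as constant in $u$, is always a compatible choice of $\nu$, and that the $\nu$-variables live in a bounded box.
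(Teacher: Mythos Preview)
Your proof is correct and follows essentially the same approach as the paper: boundedness via compactness of the feasible region, and feasibility via the canonical point $\nu_{z,x,u} = \pi_{x,z}/\pi_x$. You are in fact more careful than the paper, which glosses over both the implicit marginal condition $\mu_x = \pi_x$ (needed for the compatibility check to go through) and the $\pi_x = 0$ edge case.
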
\label{lmma:bound_feas}
\begin{proof}
    Observe that the feasible set for $P_{\mu}$ is a closed subset of $\Mu$, which implies that the feasible region is compact, and hence, that $P_{\mu}$ is bounded.

    On the other hand, let $\nu \in \Nu$ be such that
    \[
        \nu_{z,x,u} = \frac{\pi_{x, z}}{\pi_x}.
    \]
    That is, $Z$ is independent of $U$ and simply chosen from the true conditional distribution of $Z$ given $X$.

    From this, it is clear that $\nu$ is in fact compatible with $\mu$ for any $\mu \in  \Mu$, and hence feasible point of $P_{\mu}$.
\end{proof}

From the previous lemma, we can obtain \textbf{strong duality} \cite[Chapter 7]{schrijver1998theory}, which states that the value of $P_{\mu}$ is equal to that of the following program: 
\begin{align*}
    P_{\mu}^* = 
    \text{maximize }&\sum_{x \in \Omega} \sum_{z \in \Omega} \pi_{x, z} \alpha_{x, z} + \sum_{x \in \Omega} \sum_{u \in \Gamma} \beta_{u, x}\\
    \text{such that }&\forall x,z \in \Omega, \forall u \in \Gamma,
    \mu_{x,u} \alpha_{x, z} + \beta_{u,x} \le \ell_{z, x}\mu_u
\end{align*}
We can greatly simplify this dual program, to turn it into an unconstrained optimization problem.

Let 
\[
    f(\mu, \alpha) = \sum_{x \in \Omega} \sum_{z\in \Omega} \pi_{x,z}\alpha_{x,z} + \sum_{x \in \Omega} \sum_{u \in \Gamma} \min_{z \in \Omega} \left( \ell_{z, x}\mu_u - \mu_{x,u} \alpha_{x,z}\right)
\]

\begin{lemma}\label{lmma:dual_bound}
    \[\val(P_{\mu}) = \max_{\alpha \in \R^{\Omega^2}} f(\mu, \alpha)\]
\end{lemma}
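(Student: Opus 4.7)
The plan is to apply strong duality (available via lemma \ref{lmma:bound_feas}) to reduce $\val(P_\mu)$ to the dual program $P_\mu^*$, and then eliminate the $\beta$ variables analytically by observing that they appear in a particularly simple way. Specifically, in $P_\mu^*$ each $\beta_{u,x}$ is unconstrained in sign but bounded from above by every inequality in which it appears, and it enters the objective with coefficient $+1$. So optimizing over $\beta$ for fixed $\alpha$ is trivial.

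Concretely, the first step is to write down, for each pair $(u,x)$, the tightest upper bound on $\beta_{u,x}$ coming from the dual constraints, namely
\[
\beta_{u,x} \le \min_{z \in \Omega}\bigl(\ell_{z,x}\mu_u - \mu_{x,u}\alpha_{x,z}\bigr).
\]
Since $\Omega$ is finite, this minimum is attained, and since the objective is maximized and $\beta_{u,x}$ appears only in its own block of constraints and linearly in the objective, the optimal choice is to set $\beta_{u,x}$ equal to this minimum. Plugging these optimal values back into the objective of $P_\mu^*$ produces exactly
\[
\sum_{x,z} \pi_{x,z}\alpha_{x,z} + \sum_{x,u} \min_{z \in \Omega}\bigl(\ell_{z,x}\mu_u - \mu_{x,u}\alpha_{x,z}\bigr) = f(\mu,\alpha).
\]
Taking the supremum over $\alpha \in \R^{\Omega^2}$ then gives $\val(P_\mu^*) = \sup_\alpha f(\mu,\alpha)$.

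The second step is to invoke strong duality. Lemma \ref{lmma:bound_feas} shows that $P_\mu$ is bounded and feasible, so by LP strong duality (as cited to \cite[Chapter 7]{schrijver1998theory}) we have $\val(P_\mu) = \val(P_\mu^*)$. Combining this with the elimination of $\beta$ above gives $\val(P_\mu) = \sup_\alpha f(\mu,\alpha)$. A small final point: we should argue the supremum is attained, so that we may write $\max$ rather than $\sup$. This follows because strong duality also guarantees that the dual optimum is attained at some $(\alpha^*,\beta^*)$; the corresponding $\alpha^*$ then achieves the value of $f(\mu,\cdot)$, giving $\val(P_\mu) = \max_\alpha f(\mu,\alpha)$.

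There is no real obstacle here; the only subtlety worth flagging is making sure that the inner $\min$ in the definition of $f$ correctly captures the pointwise-tightest bound on $\beta_{u,x}$, which it does because the variable $z$ ranges freely on the right-hand side of the constraint $\mu_{x,u}\alpha_{x,z} + \beta_{u,x} \le \ell_{z,x}\mu_u$ while $(u,x)$ is fixed, so the constraints for different $z$'s decouple across $\beta_{u,x}$.
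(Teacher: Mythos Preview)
Your proposal is correct and follows essentially the same approach as the paper: eliminate the $\beta$ variables by setting each $\beta_{u,x}$ to its tightest upper bound $\min_{z}\bigl(\ell_{z,x}\mu_u - \mu_{x,u}\alpha_{x,z}\bigr)$, then invoke LP strong duality (from lemma \ref{lmma:bound_feas}) to equate $\val(P_\mu)$ with the resulting supremum over $\alpha$. Your extra remark that the dual optimum is attained (hence $\sup = \max$) is a nice touch that the paper leaves implicit.
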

\begin{proof}
    Fix $\alpha \in \R^{\Omega^2}$, and let $\beta$ be defined so that
    \[
        \beta_{u,x} = \min_{z \in \Omega}\left( \ell_{z,x} \mu_u - \mu_{x,u} \alpha_{x, z}\right)
    \]
    It is not hard to see that the pair $(\alpha, \beta)$ are a feasible point of $P_{\mu}^*$, and so the value of $P_{\mu}^*$ is at least $f(\mu, \alpha)$.

    On the other hand, given an optimum $(\alpha, \beta)$ of $P_{\mu}$, notice that the inequalities imply that
    \[
        \beta_{u,x} \le \min_{z \in \Omega}\left( \ell_{z,x} \mu_u - \mu_{x,u} \alpha_{x, z}\right).
    \]
    If this inequality were strict, then we could increase $\beta_{u,x}$ without violating the inequalities, increasing the objective, so at an optimum, we must have that in fact, that this is inequality is an equality.
    We see that at this point, the objective is then equal to $f(\mu, \alpha)$, giving that $\val(P_{\mu}^*) \le \max_{\alpha} f(\mu, \alpha)$.
\end{proof}

\subsection{Mutual Information Bound}
Recall that if $\mu$ is a distribution in $\Mu$, so that the marginal on $X$ is $\mu_X$ and the marginal on $U$ is $\mu_U$, then the correlation distance of $\mu$ is 
\[
    \delta(\mu) = \|\mu - \mu_X \otimes \mu_U\|_1
\]
where $\otimes$ denotes the product distribution.

We recall that the mutual information between two random variables $U$ and $X$ distributed according to $\mu$ is defined in terms of the divergence between $\mu$ and the product distribution $\mu_X \otimes \mu_U$ (see \cite{cover1999elements}):
\[
    I(U ; X) = D(\mu || \mu_X \otimes \mu_U).
\]

We also recall the Pinsker inequality, which implies that
\[
    \delta(\mu) \le \sqrt{\frac{1}{2} I(U;X)}
\]

Also recall that the honest distribution $\eta \in \Int$ is defined so that
\[
    \eta_{x,z} = \frac{\pi_{x,z}}{\pi_x}
\]

using these results and the results in the previous section, we can show the following bound on $\val(P_{\mu})$:

\begin{theorem}
    \label{thm:mutual_info}
    Suppose that $(\mu, \nu) \in M_{\pi}$ are model parameters with $\pi$ as their observed distribution. Let $\zeta$ be the associated interventional distribution.  Then,
    \[
        \|\zeta - \eta\|_1 \le \frac{1}{\min_{x} \pi(X=x)}\delta(\mu)
    \]
\end{theorem}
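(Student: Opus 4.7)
The plan is to compute $\zeta-\eta$ entrywise, recognize the correlation quantity $\mu_{x,u}-\mu_x\mu_u$ as the natural term that appears, and then control the $\mathcal{L}_1$ norm by the triangle inequality together with the normalization $\sum_z \nu_{z,x,u}=1$.

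First I would substitute the definitions. Writing $\mu_x$ for the marginal on $X$ (which equals $\pi_x$ by compatibility) and $\mu_u$ for the marginal on $U$, the defining formulas give
\[
    \zeta_{x,z}-\eta_{x,z} \;=\; \sum_{u\in\Gamma}\mu_u\,\nu_{z,x,u} \;-\; \frac{1}{\pi_x}\sum_{u\in\Gamma}\mu_{x,u}\,\nu_{z,x,u} \;=\; \frac{1}{\pi_x}\sum_{u\in\Gamma}\bigl(\mu_x\mu_u-\mu_{x,u}\bigr)\nu_{z,x,u}.
\]
This is the key identity: the discrepancy between the interventional and honest conditional distributions at $(x,z)$ is a $\nu$-weighted sum of the correlation deviations $\mu_{x,u}-\mu_x\mu_u$, rescaled by $1/\pi_x$. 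This step is the real content; everything else is bookkeeping.

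Next I would apply the triangle inequality in $z$ and swap summation orders:
\[
    \|\zeta-\eta\|_1 \;=\; \sum_{x,z}|\zeta_{x,z}-\eta_{x,z}| \;\le\; \sum_{x}\frac{1}{\pi_x}\sum_{u}|\mu_{x,u}-\mu_x\mu_u|\sum_{z}\nu_{z,x,u}.
\]
The inner sum over $z$ is exactly $1$, since $\nu\in\Nu$ assigns a conditional probability distribution on $Z$ for each $(x,u)$. Pulling out the worst $\pi_x$ then yields
\[
    \|\zeta-\eta\|_1 \;\le\; \frac{1}{\min_{x}\pi_x}\sum_{x,u}|\mu_{x,u}-\mu_x\mu_u| \;=\; \frac{1}{\min_{x}\pi_x}\,\delta(\mu),
\]
which is the claim.

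I do not foresee any serious obstacle: once the pointwise identity for $\zeta_{x,z}-\eta_{x,z}$ is spotted, everything reduces to the triangle inequality and the fact that $\nu$ sums to $1$ in its first index. The one subtlety worth stating explicitly in the write-up is that $\mu_x=\pi_x$ (so the $\pi_x$ coming from $\eta$ matches the $\mu_x$ coming from the correlation term), which is what allows the numerator to simplify to the $\mathcal{L}_1$ quantity $\delta(\mu)$. The Pinsker corollary stated right after the theorem then follows by directly substituting $\delta(\mu)\le\sqrt{I(U;X)/2}$.
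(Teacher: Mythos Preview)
Your proof is correct, and it takes a genuinely different route from the paper. The paper goes through the linear-programming duality machinery it develops in Section~\ref{sec:linear-program}: it fixes the sign functional $\ell_{x,z}=-\sgn((\zeta-\eta)_{x,z})$, invokes Lemma~\ref{lmma:dual_bound} to lower bound $\val(P_\mu)$ by $f(\mu,\alpha)$ for a suitable choice of dual variables $\alpha_{x,z}=\ell_{x,z}/\pi_x$, and then bounds $|L(\eta)-f(\mu,\alpha)|$ by $\delta(\mu)/\min_x\pi_x$. Your argument bypasses all of this by computing $\zeta_{x,z}-\eta_{x,z}$ directly, recognizing $\mu_x\mu_u-\mu_{x,u}$ inside, and finishing with the triangle inequality and $\sum_z\nu_{z,x,u}=1$. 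Your route is shorter and more transparent for this specific theorem; the paper's route has the advantage of illustrating the general optimization framework that is needed anyway for the diagonal-functional bound (Theorem~\ref{thm:diagonal_bound}), so the authors presumably wanted both results to flow from the same duality template. The one step you flag as a subtlety --- that $\mu_x=\pi_x$ --- is indeed needed and follows immediately from summing the compatibility equation $\pi_{x,z}=\sum_u\mu_{x,u}\nu_{z,x,u}$ over $z$.
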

\begin{proof}
    Let $\ell_{x,z} = -\sgn((\zeta - \eta)_{x,z})$, where $\sgn(x) =\begin{cases} 1\text{ if }x \ge 0\\-1\text{ otherwise}\end{cases}$.

    Let $L$ be the linear functional so that for any $\psi \in \Int$, $L(\psi) = \langle \ell, \psi\rangle$, and notice that 
    \[
        L(\zeta - \eta) = L(\zeta) - L(\eta) = -\|\zeta - \eta\|_1
    \]
    In particular, because the right side of this equation is negative, we can assume that $L(\zeta) \le L(\eta)$.

    Hence, it suffices to obtain an lower bound on $L(\zeta) - L(\eta)$ for any $\zeta$ which is compatible with this value of $\mu$. 

    Let $P_{\mu}$ be the linear program with $L$ as its objective. It suffices to lower bound $\val(P_{\mu})$.

    Using the results in the previous section, we have that for any $\alpha \in \R^{\Omega^2}$,
    \[
        \val(P_{\mu}) \ge f(\mu, \alpha).
    \]
    We now choose the dual variables  $\alpha_{x,z} = \frac{\ell_{x,z}}{\pi_x}$.
    At this $\alpha$, $f$ evaluates to
    \[
        f(\mu, \alpha) = \sum_{x \in \Omega} \sum_{z\in \Omega} \frac{\pi_{x,z}}{\pi_x}\ell_{x,z} + \sum_{x \in \Omega} \sum_{u \in \Gamma} \min_{z \in \Omega} \left( \ell_{z, x}\left(\mu_u - \frac{\mu_{x,u}}{\pi_x}\right)\right)
    \]
    Let $d_{x,u} = \mu_{x,u} - \mu_u \pi_x$. We have that
    \[
        \sum_{x \in \Omega}\sum_{u\in \Gamma} |d_{x,u}| = \delta(\mu)
    \]

    We compute
    \begin{align*}
        |L(\eta) - f(\mu, \alpha)| &= \left|\sum_{x \in \Omega} \sum_{u \in \Gamma} \min_{z \in \Omega} \left( \ell_{z, x}\left(\mu_u - \frac{\mu_u\pi_{x} + d_{x,u}}{\pi_x}\right)\right)\right|\\
                                   &\le \sum_{x \in \Omega} \sum_{u \in \Gamma} \left|\min_{z \in \Omega} \left( -\frac{\ell_{z, x}d_{x,u}}{\pi_x}\right)\right|\\
                                   &\le \sum_{x \in \Omega} \sum_{u \in \Gamma} \frac{1}{\min_x \pi_x} |d_{x,u}|\\
                                   &\le \frac{1}{\min_x \pi_x} \delta(\mu)
    \end{align*}

    Hence, we have that $L(\zeta) - L(\eta) \ge -\frac{1}{\min_x \pi_x} \delta(\mu)$, implying that
    \[
        \|\zeta - \eta\|_1 \le \frac{1}{\min_x \pi_x} \delta(\mu)
    \]

\end{proof}
\section{Diagonal Linear Functionals}
We will apply the results of the last section on $P_{\mu}$ to a particular kind of linear functional on the space $\Int$.

Let $L$ be a linear functional of the form
\[
    L(\zeta) = \sum_{x \in \Omega} \ell_x \zeta_{x,x}.
\]

We will also assume that $|\Gamma| \le k$, so that $U$ can take on at most $k$ distinct values.

We will make use of an easy lemma:
\begin{lemma}
    \label{lmma:amhm}
    If $a_1 ,\dots, a_{\ell} \ge 0$, then 
    \[
        \sum_{i=1}^{\ell} \frac{1}{a_i} \ge \frac{\ell^2}{\sum_{i=1}^{\ell} a_i} 
    \]

\end{lemma}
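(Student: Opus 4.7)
The plan is to recognize this as the classical AM-HM inequality (or equivalently the Cauchy--Schwarz inequality in a very clean disguise) and to give a one-line proof via Cauchy--Schwarz. Strictly speaking the statement as written is slightly sloppy because the left-hand side is undefined when some $a_i = 0$; I would first note that it suffices to treat the case $a_i > 0$ for all $i$, since if any $a_i = 0$ the left side is $+\infty$ and the inequality is vacuous (and we can recover the general case by a limit argument, or by dropping zero terms and rescaling $\ell$).

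The main step is the following application of Cauchy--Schwarz to the vectors $(\sqrt{a_1}, \dots, \sqrt{a_\ell})$ and $(1/\sqrt{a_1}, \dots, 1/\sqrt{a_\ell})$:
\[
    \ell^2 \;=\; \left(\sum_{i=1}^\ell 1\right)^{\!2} \;=\; \left(\sum_{i=1}^\ell \sqrt{a_i}\cdot\frac{1}{\sqrt{a_i}}\right)^{\!2} \;\le\; \left(\sum_{i=1}^\ell a_i\right)\!\left(\sum_{i=1}^\ell \frac{1}{a_i}\right).
\]
Dividing both sides by $\sum_{i=1}^\ell a_i$ (which is positive under our reduction) yields exactly the claimed inequality.

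There is really no obstacle here; the only thing to be careful about is the handling of the degenerate cases ($a_i=0$ or $\sum a_i = 0$), and the choice between this Cauchy--Schwarz proof and an equivalent argument by Jensen's inequality applied to the convex function $x \mapsto 1/x$ on $(0,\infty)$. I would use the Cauchy--Schwarz version because it is self-contained and shorter. Equality holds iff all $a_i$ are equal, which is worth recording for downstream use even though the lemma statement does not explicitly mention it.
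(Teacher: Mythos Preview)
Your proof is correct. The paper's argument is slightly different in packaging: it writes
\[
    \sum_{i=1}^{\ell}\frac{1}{a_i} = \ell\cdot\frac{1}{\ell}\sum_{i=1}^{\ell}\frac{1}{a_i} \ge \frac{\ell}{\frac{1}{\ell}\sum_{i=1}^{\ell} a_i} = \frac{\ell^2}{\sum_{i=1}^{\ell} a_i},
\]
i.e.\ it invokes the arithmetic-mean--harmonic-mean inequality as a black box (with a citation), rather than deriving it from Cauchy--Schwarz. The two arguments are of course equivalent, and yours has the minor advantage of being self-contained and of explicitly addressing the degenerate case $a_i = 0$, which the paper's statement and proof both leave implicit.
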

\begin{proof}

    \begin{align*} 
        \sum_{i=1}^{\ell}\frac{1}{a_i} &= \ell \sum_{i=1}^{\ell}\frac{1}{\ell a_i}\\
                                       &\ge \frac{\ell}{\sum_{i=1}^{\ell}\frac{1}{\ell}a_i}\\
                                       &=  \frac{\ell^2}{\sum_{i=1}^{\ell}a_i},\\
    \end{align*}
    where the second line follows from the arithmethic-mean-harmonic-mean (AMHM) inequality \cite{hardy1952inequalities}.
    
\end{proof}
    We focus on the sum of the form
    \[
          \sum_{x \in g^{-1}(u)}\frac{1}{\mu_{g(x),x}}.
    \]
    We wish to apply the arithmetic-mean-harmonic-mean inequality here , so we perform some manipulations:

\begin{theorem}
    \label{thm:diagonal_bound}
    Let $L$ be a diagonal functional, and let $\pi$ be some observed distribution.
    Let $\supp(L) = \{x \in \Omega : \ell_x > 0\}$, then for some $x \in \supp(L)$,
    \[
        L(\zeta) \ge \frac{|\supp(L)|^2}{k} \ell_x \pi(X=x, Z=x)
    \]
    for all $\zeta \in I_{\pi}$.
\end{theorem}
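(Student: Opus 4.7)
The plan is to bound $\val(P_{\mu})$ uniformly in $\mu \in \Mu$ via the dual formula in Lemma \ref{lmma:dual_bound}; since every $\zeta \in I_{\pi}$ equals $\intmap(\mu, \nu)$ for some $(\mu, \nu) \in M_{\pi}$, we have $L(\zeta) \ge \val(P_{\mu})$ for that $\mu$, and a bound independent of $\mu$ then transfers to all of $I_{\pi}$. Let $S = \supp(L)$. For each $\mu$ and each $x \in S$ with $\mu_x > 0$, pick $g(x) \in \Gamma$ attaining $\max_{u} \mu_{x,u}/\mu_u$, and define $p_x(\mu) = \mu_{x,g(x)}/\mu_{g(x)}$.

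The first step is to exhibit an explicit dual witness certifying $\val(P_{\mu}) \ge \sum_{x \in S} \ell_x \pi_{x,x}/p_x(\mu)$. Set $\alpha_{x,x} = \ell_x/p_x(\mu)$ for $x \in S$, and set all other $\alpha_{x,z} = 0$. Because $L$ is diagonal, $\ell_{z,x} = \ell_x 1_{z=x}$, so for $x \in S$ the inner minimum in $f(\mu,\alpha)$ reduces to $\min(\ell_x \mu_u - \mu_{x,u}\alpha_{x,x},\, 0)$. By the defining property of $g(x)$, $\mu_{x,u}/\mu_u \le p_x(\mu)$, hence $\mu_{x,u}\alpha_{x,x} = \ell_x\mu_{x,u}/p_x(\mu) \le \ell_x\mu_u$; the first entry is nonnegative and the minimum equals $0$. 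For $x \notin S$ the minimum is trivially zero, so $f(\mu,\alpha) = \sum_{x \in S} \pi_{x,x}\alpha_{x,x} = \sum_{x \in S} \ell_x \pi_{x,x}/p_x(\mu)$.

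Next, I will show that $\sum_{x \in S} p_x(\mu) \le k$ regardless of $\mu$. Grouping by the value of $g$,
\[
    \sum_{x \in S} p_x(\mu) = \sum_{u \in \Gamma} \frac{1}{\mu_u} \sum_{x \in g^{-1}(u) \cap S} \mu_{x,u} \le \sum_{u \in \Gamma} \frac{\mu_u}{\mu_u} = k,
\]
since the inner sum is bounded by $\sum_{x \in \Omega} \mu_{x,u} = \mu_u$.

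Finally, combining the two bounds using Lemma \ref{lmma:amhm}, and setting $c_x = \ell_x \pi_{x,x}$ and $x_0 = \argmin_{x \in S} c_x$, we obtain
\[
    L(\zeta) \;\ge\; \sum_{x \in S} \frac{c_x}{p_x(\mu)} \;\ge\; c_{x_0} \sum_{x \in S} \frac{1}{p_x(\mu)} \;\ge\; c_{x_0} \cdot \frac{|S|^2}{\sum_{x \in S} p_x(\mu)} \;\ge\; \frac{|S|^2}{k}\, \ell_{x_0} \pi_{x_0, x_0},
\]
which is the desired bound with $x = x_0$. The main technical step is guessing the right dual witness; its form is motivated by the primal intuition that a greedy minimizer of $\zeta_{x,x}$ concentrates $\nu_{x,x,\cdot}$ on $u = g(x)$ and pays $\ell_x/p_x(\mu)$ per unit of $\pi_{x,x}$, which matches the chosen $\alpha_{x,x}$. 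Edge cases where $\pi_{x,x} = 0$ or $\mu_u = 0$ make the claim trivial or exclude those indices from the relevant sums.
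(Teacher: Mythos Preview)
Your proof is correct and follows essentially the same approach as the paper: the same dual witness $\alpha_{x,x}=\ell_x\mu_{g(x)}/\mu_{x,g(x)}$ with $g(x)=\argmax_u \mu_{x,u}/\mu_u$, the same verification that the inner minimum in $f(\mu,\alpha)$ vanishes, and the same reduction to bounding $\sum_{x\in S}\mu_{g(x)}/\mu_{x,g(x)}$. The only difference is cosmetic: the paper applies AMHM within each fiber $g^{-1}(u)$ and then Cauchy--Schwarz across fibers, whereas you first bound $\sum_{x\in S}p_x(\mu)\le k$ and apply AMHM once---a slightly cleaner packaging of the same inequality.
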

\begin{proof}
    From lemma \ref{lmma:dual_bound}, we have that for any $\mu \in \Mu$, and any $\alpha \in \R^{\Omega^2}$,
    \[
        \val(P_{\mu}) \ge f(\mu, \alpha)
    \]

    Let $g : \supp(L) \rightarrow \Gamma$ be the function $g(x) = \argmin\frac{\mu_u}{\mu_{u,x}}$, and then let
    \[
        \alpha_{x,z} = 
        \begin{cases}
            \frac{\ell_x\mu_{g(x)}}{\mu_{g(x),x}} \text{ if } x = z\\
            0 \text{ otherwise}
        \end{cases}
    \]
    Inputting this particular value of $\alpha$ and $L$ into $f$ yields
    \[
        f(\mu, \alpha) = \sum_{x \in \Omega} \pi_{x,x}\alpha_{x,x} + \sum_{x \in \Omega} \sum_{u \in \Gamma} \min\{ \ell_{x}\mu_u - \mu_{x,u} \alpha_{x,x}, 0\}
    \]
    From our choice of $\alpha$, we have that for any $x, u$, 
    \[
        \ell_{x}\mu_u - \mu_{x,u} \alpha_{x,x} \ge 0
    \]
    Hence, we have that 
    \begin{align*} 
        f(\mu, \alpha)&\ge\sum_{x \in \supp(L)} \frac{\ell_x\pi_{x,x}\mu_{g(x)}}{\mu_{g(x),x}}\\
                      &=\sum_{u \in \Gamma} \sum_{x \in g^{-1}(u)}\frac{\ell_x\pi_{x,x}\mu_{g(x)}}{\mu_{g(x),x}}\\
                      &\ge \left(\min_{x \in \supp(L)} \ell_x \pi_{x,x}\right)\left(\sum_{u \in \Gamma} \mu_{u}\sum_{x \in g^{-1}(u)}\frac{1}{\mu_{u,x}}\right).
    \end{align*}

    By lemma \ref{lmma:amhm}, for any fixed $u \in \Gamma$,
    \[
        \sum_{x \in g^{-1}(u)}\frac{1}{\mu_{u,x}} \ge \frac{|g^{-1}(u)|^2}{\sum_{x \in g^{-1}(u)}\mu_{u, x}} \ge \frac{|g^{-1}(u)|^2}{\mu_{u}}.
    \]

    Combining the previous two inequalities yields
    \begin{align*} 
        f(\mu, \alpha) &\ge \left(\min_{x \in \supp(L)} \ell_x \pi_{x,x}\right)\left(\sum_{u \in \Gamma} |g^{-1}(u)|\right).
    \end{align*}

    Notice that $\sum_{u \in \Gamma} |g^{-1}(u)| = n$, and that the function $\|x\|^2 = \sum_{u \in \Gamma} x_u^2$ is convex.
    Therefore, we can obtain the bound
    \[
        \sum_{u \in \Gamma} |g^{-1}(u)|^2 \ge \frac{|\supp(L)|^2}{k}.
    \]

    The conclusion then follows.

\end{proof}

\section{Acknowledgements}
We would like to thank Leonard Schulman for initiating this line of inquiry and for valuable discussions. We would like to thank Spencer Gordon for helpful discussions.

\bibliographystyle{plain}
\bibliography{causal}

\begin{thebibliography}{10}

\bibitem{blekherman2012semidefinite}
Grigoriy Blekherman, Pablo~A Parrilo, and Rekha~R Thomas.
\newblock {\em Semidefinite optimization and convex algebraic geometry}.
\newblock SIAM, 2012.

\bibitem{bochnak2013real}
Jacek Bochnak, Michel Coste, and Marie-Fran{\c{c}}oise Roy.
\newblock {\em Real algebraic geometry}, volume~36.
\newblock Springer Science \& Business Media, 2013.

\bibitem{cover1999elements}
Thomas~M Cover.
\newblock {\em Elements of information theory}.
\newblock John Wiley \& Sons, 1999.

\bibitem{dey2019new}
Santanu~S Dey, Asteroide Santana, and Yang Wang.
\newblock New socp relaxation and branching rule for bipartite bilinear
  programs.
\newblock {\em Optimization and Engineering}, 20(2):307--336, 2019.

\bibitem{garcia2005algebraic}
Luis~David Garcia, Michael Stillman, and Bernd Sturmfels.
\newblock Algebraic geometry of bayesian networks.
\newblock {\em Journal of Symbolic Computation}, 39(3-4):331--355, 2005.

\bibitem{hardy1952inequalities}
G.H. Hardy, Karreman Mathematics~Research Collection, J.E. Littlewood,
  G.~P{\'o}lya, G.~P{\'o}lya, and D.E. Littlewood.
\newblock {\em Inequalities}.
\newblock Cambridge Mathematical Library. Cambridge University Press, 1952.

\bibitem{huang2012pearl}
Yimin Huang and Marco Valtorta.
\newblock Pearl's calculus of intervention is complete.
\newblock {\em arXiv preprint arXiv:1206.6831}, 2012.

\bibitem{LeeSpekkens}
Ciarán~M. Lee and Robert~W. Spekkens.
\newblock Causal inference via algebraic geometry: Feasibility tests for
  functional causal structures with two binary observed variables.
\newblock {\em Journal of Causal Inference}, 5(2), 2017.

\bibitem{pearl2009causality}
Judea Pearl.
\newblock {\em Causality}.
\newblock Cambridge university press, 2009.

\bibitem{schrijver1998theory}
Alexander Schrijver.
\newblock {\em Theory of linear and integer programming}.
\newblock John Wiley \& Sons, 1998.

\bibitem{schulman2016stability}
Leonard~J Schulman and Piyush Srivastava.
\newblock Stability of causal inference.
\newblock 2016.

\bibitem{shpitser2006identification}
Ilya Shpitser and Judea Pearl.
\newblock Identification of joint interventional distributions in recursive
  semi-markovian causal models.
\newblock In {\em Proceedings of the National Conference on Artificial
  Intelligence}, volume~21, page 1219. Menlo Park, CA; Cambridge, MA; London;
  AAAI Press; MIT Press; 1999, 2006.

\end{thebibliography}

\end{document}